  \newtheorem{fact}[lemma]{Fact}
  \newtheorem{observation}[lemma]{Observation}
  \newcommand{\per}{\textsf{per}}
  \newcommand{\border}{\textsf{border}}
  \newcommand{\SUF}{\textsf{SUF}}
  \newcommand{\LCP}{\textsf{LCP}}
  \newcommand{\Q}{\textsl{FirstGE}}
  \newcommand{\cover}{\textsf{cover}}
  \newcommand{\covermax}{\textsf{covermax}}
  \newcommand{\lseed}{\textsf{lseed}}
  \newcommand{\lseedmax}{\textsf{lseedmax}}
  \newcommand{\rseed}{\textsf{rseed}}
  \newcommand{\bseed}{\textsf{bseed}}
  \newcommand{\seed}{\textsf{seed}}
  \newcommand{\Cover}{\textsf{C}}
  \newcommand{\LSeed}{\textsf{LSeed}}
  \newcommand{\Seed}{\textsf{Seed}}
  \newcommand{\Border}{\textsf{B}}
  \newcommand{\Period}{\textsf{P}}
  \newcommand{\R}{\textsf{R}}
  \newcommand{\Tree}{T}
  \newcommand{\Nodes}{\mbox{\textit{Nodes}}}
  \newcommand{\firstocc}{\mbox{\textit{first}}}
  \newcommand{\lastocc}{\mbox{\textit{last}}}
  \newcommand{\Occ}{\mbox{\textit{Occ}}}
  \newcommand{\LL}{\mbox{\textit{LL}}}
  \newcommand{\FF}{\mathcal{F}}
  \newcommand{\maxgap}{\textsf{maxgap}}
  \newcommand{\prefmaxgap}{\Delta}
  \newcommand{\SORT}{\mbox{\textit{SORT}}}
  \newcommand{\PREF}{\mbox{\textit{PREF}}}
  \def\dotdot{\mathinner{\ldotp\ldotp}}
  \date{}
  \author{\bf
    Michalis Christou\inst{1}
    \and
    Maxime Crochemore\inst{1}\fnmsep\inst{3}
    \and
    Costas S.\ Iliopoulos\inst{1}\fnmsep\inst{4}
    \and \\
    Marcin Kubica\inst{2}
    \and
    Solon P.\ Pissis\inst{1}
    \and
    Jakub Radoszewski\inst{2}\thanks{
      The author is supported by grant no.\ N206 568540 of the National Science Centre.
    }
    \and \\
    Wojciech Rytter
    \inst{2}\fnmsep\inst{5}\thanks{
    The author is supported by grant no.\ N206 566740 of the National Science Centre.
    }
    \and
    Bartosz Szreder\inst{2}
    \and
    Tomasz Wale\'n\inst{2}
  }
  \institute{
    Dept.~of Informatics, King's College London, London WC2R 2LS, UK \\
    \email{[michalis.christou,maxime.crochemore,csi,solon.pissis]@dcs.kcl.ac.uk}
    \and
    Dept.~of Mathematics, Computer Science and Mechanics, \\
    University of Warsaw, Warsaw, Poland\\
    \email{[kubica,jrad,rytter,szreder,walen]@mimuw.edu.pl}
    \and
    Universit\'e Paris-Est, France
    \and
    Digital Ecosystems \& Business Intelligence Institute, \\
    Curtin University of Technology, Perth WA 6845, Australia
    \and
    Dept. of Math. and Informatics,\\
    Copernicus University, Toru\'n, Poland
  }
  \title{
    Efficient Seeds Computation Revisited\thanks{
      The authors thank an anonymous referee for proposing several insightful remarks.
    }
  }
\begin{document}

  \maketitle
  \begin{abstract}
    The notion of the cover is a generalization of a period of a string, and
    there are linear time algorithms for finding the shortest cover.
    The seed is a more complicated generalization of periodicity, it is a cover of a superstring of
    a given string, and the shortest seed problem is of much higher algorithmic difficulty.
    The problem is not well understood, no linear time algorithm is known.
    In the paper we give linear time algorithms for some of its versions ---
    computing shortest left-seed array, longest left-seed array and checking for seeds of a given length.
    The algorithm for the last problem is used to compute the seed array of a string
    (i.e., the shortest seeds for all the prefixes of the string) in $O(n^2)$ time.
    We describe also a simpler alternative algorithm computing efficiently the shortest seeds.
    As a by-product we obtain an $O(n\log{(n/m)})$ time algorithm checking if the shortest seed has length at
    least $m$ and finding the corresponding seed.
    We also correct some important details missing in
    the previously known shortest-seed algorithm (Iliopoulos et al., 1996).
  \end{abstract}

  \section{Introduction}
    The notion of periodicity in strings is widely used in many fields, such as
    combinatorics on words, pattern matching, data compression and automata theory (see~\cite{Lot01,Lot05}).
    It is of paramount importance in several applications, not to talk about its theoretical aspects.
    The concept of quasiperiodicity is a generalization of the notion of periodicity, and
    was defined by Apostolico and Ehrenfeucht in~\cite{Apo93}.
    In a periodic repetition the occurrences of the period do not overlap.
    In contrast, the quasiperiods of a quasiperiodic string may overlap.

    We consider \emph{words} (\emph{strings}) over a finite alphabet $\Sigma$, $u \in \Sigma^*$;
    the empty word is denoted by $\varepsilon$; the positions in $u$ are numbered from $1$ to $|u|$.
    By $\Sigma^n$ we denote the set of words of length $n$.
    By $u^R$ we denote the reverse of the string $u$.
    For $u=u_1u_2\ldots u_n$, let us denote by $u[i \dotdot j]$ a \emph{factor}
    of $u$ equal to $u_i\ldots u_j$ (in particular $u[i]=u[i \dotdot i]$).
    Words $u[1 \dotdot i]$ are called \emph{prefixes} of $u$, and words $u[i \dotdot n]$ are called \emph{suffixes} of $u$.
    Words that are both prefixes and suffixes of $u$ are called \emph{borders} of $u$.
    By $\border(u)$ we denote the length of the longest border of $u$ that is shorter than $u$.
    We say that a positive integer $p$ is the (shortest) \emph{period} of a word $u=u_1\ldots u_n$
    (notation: $p=\per(u)$) if $p$ is the smallest positive number, such that $u_i=u_{i+p}$, for $i=1,\dots,n-p$.
    It is a known fact \cite{AlgorithmsOnStrings,Jewels} that, for any string $u$,
    $\per(u)+\border(u) = |u|$.

    We say that a string $s$ \emph{covers} the string $u$ if every letter of $u$ is contained
    in some occurrence of $s$ as a factor of $u$.
    Then $s$ is called a \emph{cover} of $u$.
    We say that a string $s$ is:
        a \emph{seed} of $u$ if $s$ is a factor of $u$ and $u$ is a factor of some string $w$ covered by $s$;
        a \emph{left seed} of $u$ if $s$ is both a prefix and a seed of $u$;
        a \emph{right seed} of $u$ if $s$ is both a suffix and a seed of $u$
        (equivalently, $s^R$ is a left seed of $u^R$).
    Seeds were first defined and studied by Iliopoulos, Moore and Park~\cite{DBLP:journals/algorithmica/IliopoulosMP96},
    who gave an $O(n\log{n})$ time algorithm computing all the seeds of a given string $u \in \Sigma^n$,
    in particular, the shortest seed of $u$.

    By $\cover(u)$, $\seed(u)$, $\lseed(u)$ and $\rseed(u)$ we denote the length of the shortest:
    cover, seed, left seed and right seed of $u$, respectively.
    By $\covermax(u)$ and $\lseedmax(u)$ we denote the length of the longest cover and the longest left seed of $u$
    that is shorter than $u$, or 0 if none.

    For a string $u \in \Sigma^n$, we define its:
    \emph{period array}            $\Period[1 \dotdot n]$,
    \emph{border array}            $\Border[1 \dotdot n]$,
    \emph{suffix period array}     $\Period'[1 \dotdot n]$,
    \emph{cover array}             $\Cover [1 \dotdot n]$,
    \emph{longest cover array}     $\Cover^M[1 \dotdot n]$,
    \emph{seed array}              $\Seed  [1 \dotdot n]$,
    \emph{left-seed array}         $\LSeed [1 \dotdot n]$, and
    \emph{longest left-seed array} $\LSeed^M[1 \dotdot n]$ as follows:

    \smallskip
    $
      \begin{array}{rcl@{\quad}rcl}
        \Period[i]  &=& \per(u[1 \dotdot i]),      &
        \Border[i]  &=& \border(u[1 \dotdot i]),   \\
        \Period'[i] &=& \per(u[i \dotdot n]),   &
        \Cover[i]   &=& \cover(u[1 \dotdot i]),    \\
        \Cover^M[i] &=& \covermax(u[1 \dotdot i]), &
        \Seed[i]    &=& \seed (u[1 \dotdot i]),    \\
        \LSeed[i]   &=& \lseed(u[1 \dotdot i]),    &
        \LSeed^M[i] &=& \lseedmax(u[1 \dotdot i]).
      \end{array}
    $

    \begin{table}
    $
      \begin{array}{r@{\quad}*{16}{c}}
        i & 1 & 2 & 3 & 4 & 5 & 6 & 7 & 8 & 9 & 10 & 11 & 12 & 13 & 14 & 15 & 16 \\
        \hline
        u[i] &\texttt{~a~}&\texttt{~b~}&\texttt{~a~}&\texttt{~a~}&\texttt{~b~}&\texttt{~a~}&\texttt{~a~}&\texttt{~a~}&
        \texttt{~b~}&\texttt{~b~}&  \texttt{~a~}&\texttt{~a~}&\texttt{~b~}&\texttt{~a~}&\texttt{~a~}&\texttt{~b~}\\
        \Period[i]   & 1 & 2 & 2 & 3 & 3 & 3 & 3 & 7 & 7 & 10 & 10 & 11 & 11 & 11 & 11 & 11\\
        \Border[i]   & 0 & 0 & 1 & 1 & 2 & 3 & 4 & 1 & 2 &  0 &  1 &  1 &  2 &  3 &  4 &  5\\
        \Cover[i]    & 1 & 2 & 3 & 4 & 5 & 3 & 4 & 8 & 9 & 10 & 11 & 12 & 13 & 14 & 15 & 16\\
        \Cover^M[i]  & 0 & 0 & 0 & 0 & 0 & 3 & 4 & 0 & 0 & 0 & 0 & 0 & 0 & 0 & 0 & 0\\
        \LSeed[i]    & 1 & 2 & 2 & 3 & 3 & 3 & 3 & 4 & 4 & 10 & 10 & 11 & 11 & 11 & 11 & 11\\
        \LSeed^M[i]  & 0 & 0 & 2 & 3 & 4 & 5 & 6 & 7 & 8 &  0 & 10 & 11 & 12 & 13 & 14 & 15\\
        \Seed[i]     & 1 & 2 & 2 & 3 & 3 & 3 & 3 & 4 & 4 &  8 &  8 &  8 &  8 &  8 &  8 & 11\\
      \end{array}
    $\\
    \caption{
      \label{tab:tab1}
      An example string together with its periodic and quasiperiodic arrays.
      Note that the left-seed array and the seed array are non-decreasing.
    }
    \end{table}

    \noindent
    The border array, suffix border array and period array can be computed in $O(n)$ time
    \cite{AlgorithmsOnStrings,Jewels}.
    Apostolico and Breslauer \cite{DBLP:conf/birthday/ApostolicoB97,DBLP:journals/ipl/Breslauer92}
    gave an on-line $O(n)$ time algorithm computing the cover array $\Cover [1 \dotdot n]$ of a string.
    Li and Smyth~\cite{DBLP:journals/algorithmica/LiS02} provided an algorithm,
    having the same characteristics, for computing the longest cover array $\Cover^M [1 \dotdot n]$ of a given string.
    Note that the array $\Cover^M$ enables computing all covers of all prefixes of the string,
    same property holds for the border array $\Border$.
    Unfortunately, the $\LSeed^M$ array does not share this property.

    Table~\ref{tab:tab1} shows the above defined arrays for $u=\texttt{abaabaaabbaabaab}$.
    For example, for the prefix $u[1 \dotdot 13]$ the period equals 11, the border is \texttt{ab},
    the cover is \texttt{abaabaaabbaab}, the left seed is \texttt{abaabaaabba},
    the longest left seed is \texttt{abaabaaabbaa}, and the seed is \texttt{baabaaab}.

    We list here several useful (though obvious) properties of covers and seeds.

     \begin{observation}\label{obs:properties}~\\
       (a) A cover of a cover of $u$ is also a cover of $u$.\\
       (b) A cover of a left (right) seed of $u$ is also a left (right) seed of $u$.\\
       (c) A cover of a seed of $u$ is also a seed of $u$.\\
       (d) If $u$ is a factor of $v$ then $\seed(u)  \le \seed(v)$.\\
       (e) If $u$ is a prefix of $v$ then $\lseed(u) \le \lseed(v)$.\\
       (f) If $s$ and $s'$ are two covers of a string $u$, $|s'|<|s|$, then $s'$ is a cover of $s$.\\
       (g) If $s$ is the shortest cover or the shortest left seed or the shortest seed of a string $u$ then
       $\per(s) > |s|/2$.
      \end{observation}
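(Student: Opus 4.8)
The plan is to prove the seven items essentially in the order listed, since each later item leans on the earlier ones, and to isolate the two genuinely non-trivial points: the monotonicity~(d) and the period bound~(g). First I would record two elementary facts that I will reuse throughout: by definition a cover $c$ of a string $t$ is simultaneously a prefix, a suffix and a factor of $t$ (the first and last positions of $t$ force an occurrence of $c$ starting at position $1$ and one ending at position $|t|$); and the covering relation is \emph{transitive}. This transitivity is exactly statement~(a): if $c$ covers $t$ and $t$ covers $z$, then every position of $z$ lies in an occurrence of $t$, and every position of that occurrence lies in an occurrence of $c$ inside it, so every position of $z$ lies in an occurrence of $c$.

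Next I would derive~(c) and~(b) from this. For~(c), let $s$ be a seed of $u$ witnessed by a string $w$ with $u$ a factor of $w$ and $s$ covering $w$, and let $c$ cover $s$. Then $c$ is a factor of $s$, hence of $u$; and by transitivity $c$ covers $w$, of which $u$ is a factor, so $c$ is a seed of $u$. Statement~(b) then follows at once: a cover $c$ of a left seed $s$ is a prefix of $s$, hence a prefix of $u$, and a seed of $u$ by~(c), so $c$ is a left seed of $u$; the suffix version gives the right-seed case.

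For the monotonicity statements I would treat~(e) first because it is clean. Let $s$ be the shortest left seed of $v$. If $|s|>|u|$ then $u$ is itself a left seed of $u$ and $\lseed(u)\le|u|<|s|$; otherwise $s$, being a prefix of $v$ no longer than the prefix $u$, is a prefix of $u$, hence a factor of $u$, and it still covers the witness $w$ containing $u$, so $s$ is a left seed of $u$. Statement~(d) is where the main obstacle lies. I would take the shortest seed $s$ of $v$, dispose of the case $|u|\le|s|$ trivially (since $\seed(u)\le|u|$), and for $|u|>|s|$ extract from the witness $w$ the minimal factor $w'$ containing $u$ that is a union of occurrences of $s$; this $w'$ is covered by $s$ and contains $u$. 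The delicate point is the factor requirement in the definition of a seed: $u$ need not contain a full occurrence of $s$. I would therefore take $t$ to be the length-$\seed(v)$ prefix of $u$, which is a cyclic rotation (conjugate) of $s$ forced by the covering structure of $w'$, and argue from the tiling of $w'$ by overlapping occurrences of $s$ that $t$ covers a suitable superstring of $u$; since $t$ is a factor of $u$, this exhibits a seed of $u$ of length $\seed(v)$. I expect verifying that this conjugate $t$ genuinely covers a superstring of $u$ to be the only step requiring care.

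Finally, for~(f), with $s$ and $s'$ both covering $u$ and $|s'|<|s|$: being covers, both are prefixes and suffixes of $u$, so $s'$ is a border of $s$; every position of the prefix occurrence $s=u[1\dotdot|s|]$ is covered by some occurrence of $s'$ in $u$, and any such occurrence protruding past position $|s|$ can only meet the last $|s'|-1$ positions of $s$, which are already covered by the suffix occurrence of $s'$ inside $s$, so $s'$ covers $s$. For~(g), suppose $s$ is a shortest cover, left seed, or seed of $u$ but $\per(s)\le|s|/2$. Using $\per(s)+\border(s)=|s|$, the longest proper border $b$ of $s$ has length $|s|-\per(s)\ge|s|/2>0$ and at least $\per(s)$, so the prefix and suffix occurrences of $b$ in $s$ overlap and together cover $s$; thus $b$ is a cover of $s$ strictly shorter than $s$. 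By~(a), (b) or~(c) respectively, $b$ is then a cover, left seed, or seed of $u$, contradicting the minimality of $s$. Hence $\per(s)>|s|/2$.
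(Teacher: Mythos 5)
The paper gives no proof of this Observation (it is stated as a list of ``obvious'' properties), so your argument stands on its own. Parts (a), (b), (c), (e), (f) and (g) are correct and argued cleanly; in particular your derivation of (g) from the border of length $|s|-\per(s)\ge\per(s)$ being a proper cover of $s$, combined with (a)--(c), is exactly the right mechanism. The problem is part (d), precisely the step you flagged. Your construction --- take $t$ to be the length-$\seed(v)$ prefix of $u$ and claim it is a conjugate of $s$ that covers a superstring of $u$ --- fails. Counterexample: let $s=\texttt{aabaa}$ and $v=w=\texttt{aabaabaaaabaa}$; then $s$ occurs in $w$ at positions $1,4,9$ and covers $w$, so $\seed(v)\le 5$. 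Take $u=w[2\dotdot 12]=\texttt{abaabaaaaba}$. The length-$5$ prefix of $u$ is $t=\texttt{abaab}$, which contains two \texttt{b}'s while $s$ contains one, so $t$ is not a conjugate of $s$. Worse, $t$ is not a seed of $u$ at all: its unique occurrence in $u$ is at position $1$, and any occurrence of $t$ covering position $7$ of $u$ in a superstring $xuy$ would lie entirely inside $u$, so position $7$ can never be covered. The inequality $\seed(u)\le 5$ nevertheless holds, but via a different factor: $s=\texttt{aabaa}$ itself occurs in $u$ (at position $3$) and covers $w\supseteq u$, hence is a seed of $u$. So the prefix of $u$ is simply the wrong witness.

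A correct argument for (d) needs a case split. Let $s$ cover $w\supseteq v\supseteq u$ with $u=w[i\dotdot j]$, and assume $|s|<|u|$. Let $a$ be the largest occurrence of $s$ in $w$ with $a\le i$ and let $b$ be the next occurrence; covering forces $b\le a+|s|$, and maximality of $a$ forces $b>i$. If the occurrence at $b$ (or the one at $a$, when $a=i$) ends by position $j$, then $s$ itself is a factor of $u$, and since $s$ covers $w$ it is a seed of $u$, giving $\seed(u)\le|s|$. Otherwise $u$ lies inside $w[a\dotdot b+|s|-1]$, which is the union of two overlapping or abutting occurrences of $s$ and therefore has period $b-a\le|s|$; hence $\per(u)\le|s|$, and the prefix $u[1\dotdot|s|]$ is then a left seed of $u$ (this is exactly the $(\Leftarrow)$ direction of Lemma~\ref{lem:covers-and-lseeds} applied with $j=|s|\ge\per(u)$), again giving $\seed(u)\le|s|$. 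Note that in the first case the witness is $s$ itself and in the second it is the prefix of $u$; no single choice works uniformly, which is exactly what your example above illustrates.
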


    \noindent
    For a set $X$ of positive integers, let us define the \emph{maxgap} of $X$ as:
    $$\maxgap(X)\ =\ \max\{b-a : a, b \mbox{ are consecutive numbers in } X\} \ 
    \mbox{ or }0\mbox{ if }|X| \le 1.$$
    For example $\maxgap(\{1,3,8,13,17\})=5$.

    For a factor $v$ of $u$, let us define $\Occ(v,u)$ as the set of starting positions of all
    occurrences of $v$ in $u$.
    By $\firstocc(v)$ and $\lastocc(v)$ we denote $\min\Occ(v,u)$ and $\max \Occ(v,u)$ 
    respectively.
    For the sake of simplicity, we will abuse the notation, and denote
    $\maxgap(v) = \maxgap(\Occ(v,u))$.

\begin{figure}
  \begin{center}
    \includegraphics{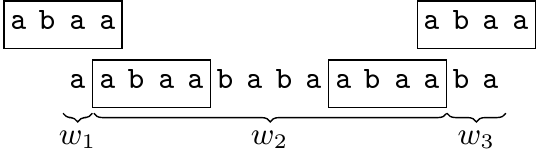}
  \end{center}
  \caption{\label{fig:border-seed}
    The word $s=\mathtt{abaa}$ is a border seed of $u=\mathtt{aabaababaabaaba}$.
  }
\end{figure}

    Assume $s$ is a factor of $u$.
    Let us decompose the word $u$ into $w_1w_2w_3$, where $w_2$ is the longest
    factor of $u$ for which $s$ is a border, i.e., 
    $w_2 = u[\firstocc(s) \dotdot (\lastocc(s)+|s|-1)]$.
    Then we say that $s$ is a \emph{border seed} of $u$ if $s$ is a seed of 
    $w_1\cdot s\cdot w_3$, see Fig.~\ref{fig:border-seed}.
    The following fact is a corollary of Lemma~\ref{lem:covers-and-lseeds}, 
    proved in Section \ref{sec:LSeed}.

    \begin{fact}\label{fact:border-seed}
      Let $s$ be a factor of $u \in \Sigma^*$.
      The word $s$ is a border seed of $u$ if and only if\ \ 
      $|s|\ \ge\ \max(\Period\,[\firstocc(s)+|s|-1],\ \Period'[\lastocc(s)]).$
    \end{fact}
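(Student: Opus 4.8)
The plan is to work entirely with the single string $v=w_1 s w_3$, since by definition $s$ is a border seed of $u$ exactly when $s$ is a seed of $v$. First I set $f=\firstocc(s)$, $\ell=\lastocc(s)$, and record that $w_1 s=u[1\dotdot f+|s|-1]$ and $s\,w_3=u[\ell\dotdot n]$; hence $\Period[\firstocc(s)+|s|-1]=\per(w_1 s)$ and $\Period'[\lastocc(s)]=\per(s\,w_3)$, so the target inequality is just $|s|\ge\max(\per(w_1 s),\per(s\,w_3))$. The key preliminary observation I would establish is that $s$ has a \emph{unique} occurrence in $v$, namely the central copy starting at position $f=|w_1|+1$: because $w_2$ begins and ends with $s$, the word $v$ agrees with $u$ on the prefix $w_1 s$ and on the suffix $s\,w_3$, so every occurrence of $s$ in $v$ lies inside one of these two blocks and is therefore an occurrence of $s$ in $u$ inside $[1,f+|s|-1]$ or $[\ell,n]$; by the definitions of $f$ and $\ell$ the only such occurrence is the central one. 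No occurrence can straddle both $w_1$ and $w_3$, as that would require length exceeding $|s|$.

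Using this, I would decouple the seed condition into two independent one-sided conditions. The central occurrence covers exactly the block $v[f\dotdot f+|s|-1]$, which is disjoint from $w_1$ and from $w_3$; since no occurrence of $s$ begins strictly inside $v$ to the left of $f$, any occurrence covering a position of $w_1$ must overhang the left end of $v$, and symmetrically on the right, and no single occurrence can reach both $w_1$ and $w_3$. Consequently $s$ is a seed of $v$ iff it is simultaneously (i) a \emph{right} seed of the prefix $w_1 s$ (the central copy serving as the anchoring suffix, extra copies overhanging on the left) and (ii) a \emph{left} seed of the suffix $s\,w_3$ (the central copy serving as the anchoring prefix, extra copies overhanging on the right). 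The $\maxgap$/coverage-of-the-middle requirement is vacuous precisely because $s$ occurs only once inside $v$.

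The final step is to convert (i) and (ii) into periods by invoking Lemma~\ref{lem:covers-and-lseeds}. Condition (ii) says $s$ is a left seed of $s\,w_3$, which I would read off from the lemma as $\per(s\,w_3)\le|s|$. Condition (i) is the mirror image: passing to reversals, $s$ is a right seed of $w_1 s$ iff $s^R$ is a left seed of $(w_1 s)^R$, which by the same lemma is $\per((w_1 s)^R)\le|s|$; as reversal preserves periods this is $\per(w_1 s)\le|s|$. Conjoining (i) and (ii) gives $|s|\ge\max(\per(w_1 s),\per(s\,w_3))$, the asserted bound. The degenerate cases $w_1=\varepsilon$ or $w_3=\varepsilon$ (that is, $s$ already a prefix or suffix of $u$) are immediate, since the corresponding period condition collapses to $\per(s)\le|s|$, which always holds.

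The main obstacle is the period characterisation delivered by Lemma~\ref{lem:covers-and-lseeds}, and specifically its ``only if'' direction, which is why the Fact is phrased as a corollary rather than proved from scratch. The ``if'' direction is the easy constructive one: given $\per(w_1 s)=p\le|s|$, one tiles copies of $s$ leftwards from the anchoring suffix in steps of $p$, periodicity guaranteeing that each copy is genuine and that consecutive copies overlap or abut, so the whole prefix is covered. The converse --- extracting a single period $\le|s|$ of the entire factor from a covering whose successive occurrence-shifts may differ --- is the delicate point absorbed by the lemma. A secondary care-point is the orientation bookkeeping in the reversal step for condition (i), ensuring that ``left seed of the reversed word'' really matches the intended left-overhanging cover of $w_1 s$.
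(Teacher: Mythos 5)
Your proof is correct and takes the route the paper intends: Fact~\ref{fact:border-seed} is stated there as a corollary of Lemma~\ref{lem:covers-and-lseeds} with the derivation omitted, and your argument --- the decomposition of $v=w_1 s w_3$ into the blocks $w_1 s=u[1\dotdot \firstocc(s)+|s|-1]$ and $s w_3=u[\lastocc(s)\dotdot n]$, the uniqueness of the occurrence of $s$ in $v$, and the two applications of Lemma~\ref{lem:covers-and-lseeds} (one via reversal) --- supplies exactly the missing details. The only step worth making fully explicit is that when you ``read off'' $\per(s w_3)\le|s|$ from the lemma you are using that the longest prefix of $s w_3$ covered by $s$ has length exactly $|s|$ (the lemma's condition is $j\ge\per(\cdot)$ for that longest covered prefix $j$, and the equivalence with $|s|\ge\per(\cdot)$ is false for general prefixes), which is where your uniqueness observation --- that $\lastocc(s)$ is the last occurrence, so $s$ occurs in $u[\lastocc(s)\dotdot n]$ only at its start --- must be invoked a second time.
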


    \smallskip
    \noindent
    Notions of maxgaps and border seeds provide a useful characterization of seeds.

    \begin{observation}\label{obs:maxgap-border-seed}
      Let $s$ be a factor of $u \in \Sigma^*$.
      The word $s$ is a seed of $u$ if and only if $|s| \ge \maxgap(s)$
      and $s$ is a border seed of $u$.
    \end{observation}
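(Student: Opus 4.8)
The plan is to read the seed condition as the requirement to cover three zones of $u$ by occurrences of $s$: the central block $w_2 = u[\firstocc(s)\dotdot\lastocc(s)+|s|-1]$, and the two overhangs $w_1$ and $w_3$. First I would record the elementary geometric fact underlying the maxgap term: the union of the occurrences of $s$ that lie inside $u$ is exactly the interval spanned by $w_2$, and this union is gap-free precisely when every pair of consecutive starting positions in $\Occ(s,u)$ differs by at most $|s|$, i.e.\ when $|s|\ge\maxgap(s)$. Thus covering $w_2$ from within $u$ is equivalent to $|s|\ge\maxgap(s)$, and the remaining content of the equivalence is that the end zones $w_1,w_3$ are coverable once we are allowed to use occurrences reaching outside $u$ into a superstring.

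For the forward implication I would start from a superstring $w$ of $u$ covered by $s$. To get $|s|\ge\maxgap(s)$: if two consecutive occurrences started at $a<b$ with $b-a>|s|$, then position $a+|s|$ lies strictly inside the span of $w_2$ and must be covered by some occurrence of $s$ in $w$ starting at $q$ with $a+1\le q\le a+|s|<b\le\lastocc(s)$; since $q<b\le\lastocc(s)$ the occurrence $[q\dotdot q+|s|-1]$ fits inside $u$, yielding a genuine occurrence of $s$ in $u$ strictly between $a$ and $b$ and contradicting consecutiveness. To obtain the border-seed condition I would then \emph{collapse} $w_2$: inside $w$ replace the whole central block of $u$ by a single copy of $s$, leaving both overhangs of $w$ untouched. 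Any covering occurrence that touches a position of $w_1$ starts at $\le\firstocc(s)-1$ and hence ends at $\le\firstocc(s)+|s|-2$, so it reads only the first copy of $s$ inside $w_2$, which the collapse preserves; the same holds symmetrically on the right, and the collapsed copy of $s$ is itself an occurrence. This exhibits $w_1\,s\,w_3$ as a factor of a string covered by $s$, so $s$ is a border seed of $u$.

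For the converse I would run the construction backwards. Given a superstring $w'$ of $w_1\,s\,w_3$ covered by $s$ together with $|s|\ge\maxgap(s)$, I would \emph{expand} the central copy of $s$ back into $w_2$. Since $s$ is simultaneously a prefix and a suffix of $w_2$, every covering occurrence overlapping the old central $s$ from the left (resp.\ right) still reads a matching prefix (resp.\ suffix) of $w_2$ and stays valid, the two boundary copies of $s$ remain occurrences, and the interior of $w_2$ is filled by the internal occurrences of $s$, whose union is gap-free exactly because $|s|\ge\maxgap(s)$. The outcome is a superstring of $w_1w_2w_3=u$ covered by $s$, so $s$ is a seed of $u$.

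The main obstacle is the bookkeeping around the two collapse/expansion operations: one must verify that no covering occurrence straddling a $w_1/w_2$ or $w_2/w_3$ boundary reads a character that changes under the operation (the estimates that such occurrences end within, respectively begin within, the first and last copies of $s$ are exactly where the inequalities must be tracked), and that the argument survives the degenerate cases where $w_1$ or $w_3$ is empty or where $s$ occurs only once, so that $w_2=s$ and $w_1 s w_3=u$ and both conditions reduce to the seed condition itself.
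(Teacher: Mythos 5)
Your proof is correct. Note that the paper states this as an \emph{observation} and gives no proof at all, treating the equivalence as immediate from the definitions; so there is no argument of the authors' to compare against, and what you have written is a complete formalization of what they leave implicit. Your two key steps are exactly the right ones. For the necessity of $|s|\ge\maxgap(s)$, the extra-occurrence argument (an occurrence of $s$ in the superstring covering position $a+|s|$ must start at some $q$ with $a<q<b\le\lastocc(s)$ and hence lies wholly inside $u$, contradicting consecutiveness of $a,b$ in $\Occ(s,u)$) is sound, and is in fact the same style of reasoning the authors do spell out later when proving Lemma~\ref{lem:prefmaxgap-border-seed}. For the border-seed part, the collapse/expansion of $w_2$ works precisely because $s$ is both a prefix and a suffix of $w_2$: any covering occurrence meeting $w_1$ (or the left context of the superstring) ends by position $\firstocc(s)+|s|-1$ and so reads only the first copy of $s$ in $w_2$, symmetrically on the right, and the interior of $w_2$ is tiled by the occurrences of $s$ inside $u$ exactly when $|s|\ge\maxgap(s)$. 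You also correctly flag the degenerate case $|\Occ(s,u)|=1$, where $\maxgap(s)=0$ and the statement collapses to a tautology. I see no gap.
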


    \noindent
    Several new and efficient algorithms related to seeds in strings are presented in this paper.
    Linear time algorithms computing left-seed array and longest left-seed array are given
    in Section~\ref{sec:LSeed}.
    In Section~\ref{sec:SeedArray} we show a linear time algorithm finding seed-of-a-given-length
    and apply it to computing the seed array of a string in $O(n^2)$ time.
    Finally, in Section~\ref{sec:new-seeds} we describe an alternative simple $O(n\log{n})$ time computation
    of the shortest seed, from which we obtain an $O(n\log{(n/m)})$ time algorithm checking if
    the shortest seed has length at least $m$ (described in Section~\ref{sec:long-seeds}).

  \section{Computing Left-Seed Arrays} \label{sec:LSeed}

  In this section we show two $O(n)$ time algorithms for computing the left-seed
  array and an $O(n)$ time algorithm  for computing the longest left-seed array of a given string $u \in \Sigma^n$.
  We start by a simple characterization of the length of the shortest left
  seed of the whole string $u$ --- see Lemma \ref{lem:lseed}.
  In its proof we utilize the following auxiliary lemma which shows
  a correspondence between the shortest left seed of $u$ and shortest covers of all
  prefixes of $u$.

  \begin{lemma}\label{lem:covers-and-lseeds}
    Let $s$ be a prefix of $u$, and let $j$ be the length of the longest
    prefix of $u$ covered by $s$.
    Then $s$ is a left seed of $u$ if and only if $j \ge \per(u)$.

    In particular, the shortest left seed $s$ of $u$ is the shortest cover of
    the corresponding prefix $u[1 \dotdot j]$.
  \end{lemma}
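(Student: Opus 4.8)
The plan is to establish the equivalence in two directions and then to deduce the ``in particular'' statement as a short corollary. Throughout, write $p=\per(u)$, $n=|u|$, $m=|s|$, and recall that since $s$ is a prefix of $u$ there is always an occurrence of $s$ at position $1$, so the single occurrence at $1$ already covers $u[1\dotdot m]$ and hence $j\ge m$. Fix a chain of occurrences $1=q_1<\cdots<q_k$ of $s$ inside $u$ with consecutive gaps at most $m$ and $q_k+m-1=j$, realising the longest prefix covered by $s$.

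For the direction $j\ge p\Rightarrow s$ is a left seed, I would argue by periodic extension. Extend $u$ to the right to an arbitrarily long word $\hat u$ of period $p$; this is possible because $u$ itself has period $p$. Since $\hat u$ has period $p$ and $s=u[1\dotdot m]$, every shifted position $q_i+p$ is again a starting position of $s$ in $\hat u$, so shifting the whole chain by $p$ produces occurrences covering the interval $[1+p,\,j+p]$. The key numeric check is that the shifted chain connects to the original one: its first occurrence starts at $1+p$, which is at most $j+1$ exactly when $p\le j$; thus the hypothesis $j\ge p$ is precisely what lets the two chains overlap and extend the cover from $[1,j]$ to $[1,j+p]$. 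Iterating, $s$ covers $[1,\,j+tp]$ for every $t$, and taking $t$ large enough that $j+tp\ge n$ yields a word $w=\hat u[1\dotdot j+tp]$ covered by $s$ and having $u$ as a prefix; hence $s$ is a seed, and being a prefix of $u$, a left seed.

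The converse is the hard part. Assume $s$ is a left seed and that $j<n$ (if $j=n$ then $j=n\ge p$ trivially, as $\per(u)\le n$). Let $w\supseteq u$ be covered by $s$, say $u=w[a\dotdot a+n-1]$, so that $s$ occurs in $w$ at position $a$. Position $j+1$ of $u$ must be covered in $w$ by some occurrence of $s$, starting at $u$-coordinate $c$ with $c\le j+1\le c+m-1$, i.e. $c\in[j-m+2,\,j+1]$. From $j\ge m$ we get $c\ge 2$, ruling out a hang-over past the left end; and the maximality of $j$ forbids any occurrence of $s$ lying entirely inside $u$ that starts in $[j-m+2,\,j+1]$ (such an occurrence would extend the cover past $j$). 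Hence this occurrence hangs over the right end, $c+m-1>n$, and its portion inside $u$ is $u[c\dotdot n]$, a prefix of $s$ and therefore a prefix of $u$; being also a suffix of $u$, it is a proper border of $u$. This border corresponds to the period $c-1$ of $u$, and since $p$ is the shortest period, $p\le c-1\le j$ (using $c\le j+1$), which is exactly $j\ge p$. I expect the main obstacle to be precisely this step: recognising that the occurrence covering the first uncovered position is forced to hang over the right end and thereby induces a border whose period is bounded by $j$; the bookkeeping between $w$-coordinates and $u$-coordinates, together with the careful use of the maximality of $j$ and of $j\ge m$, must be handled with care.

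Finally, for the ``in particular'' claim, let $s$ be the shortest left seed of $u$ and $j$ its longest covered prefix, so $j\ge p$ by the equivalence and $s$ is a cover of $u[1\dotdot j]$ by definition of $j$. If some $s'$ were a strictly shorter cover of $u[1\dotdot j]$, then $s'$ would be a prefix of $u$ covering a prefix of length at least $j\ge p$, so by the equivalence $s'$ would be a left seed of $u$ shorter than $s$, contradicting minimality. Hence $s$ is the shortest cover of $u[1\dotdot j]$.
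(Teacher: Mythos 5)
Your proof is correct and follows essentially the same route as the paper: the ($\Rightarrow$) direction is the paper's argument made fully explicit (the occurrence covering the first uncovered position must overhang the right end, yielding a border of length at least $n-j$ and hence $\per(u)\le j$), and the ($\Leftarrow$) direction unpacks, via explicit shifting of the occurrence chain by multiples of $\per(u)$, what the paper obtains more abstractly by noting that $u[1\dotdot j]$ is a left seed and invoking Observation~\ref{obs:properties}b. The ``in particular'' part likewise matches the paper's appeal to Observation~\ref{obs:properties}, parts b and f.
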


  \begin{proof}
    ($\Rightarrow$)
    If $s$ is a left seed of $u$ then there exists a prefix $p$ of $s$ of length at least
    $n-j$ which is a suffix of $u$ (see Fig.~\ref{fig:lem-cbls-b}). 
    We use here the fact, that $u[1 \dotdot j]$ is the \emph{longest} prefix 
    of $u$ covered by $s$.
    Hence, $p$ is a border of $u$, and consequently
    $\border(u) \;\ge\; |p| \;\ge\; n-j$.
    Thus we obtain the desired inequality $j \ge \per(u)$.

\begin{figure}
\begin{center}
  \includegraphics{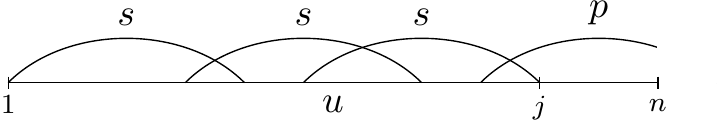}
\end{center}
\caption{\label{fig:lem-cbls-b}
  Illustration of part ($\Rightarrow$) of Lemma \ref{lem:covers-and-lseeds}.
}
\end{figure}

    ($\Leftarrow$)
    The inequality $j \ge \per(u)$ implies that $v=u[1 \dotdot j]$ is a left seed of $u$
    (see Fig.~\ref{fig:lem-cbls-a}).
    Hence, by Observation \ref{obs:properties}b, the word $s$, which is a cover of $v$,
    is also a left seed of~$u$.

    Finally, the ``in particular'' part is a consequence of Observation~\ref{obs:properties},
    parts b and f.
  \qed
  \end{proof}

\begin{figure}
\begin{center}
  \includegraphics{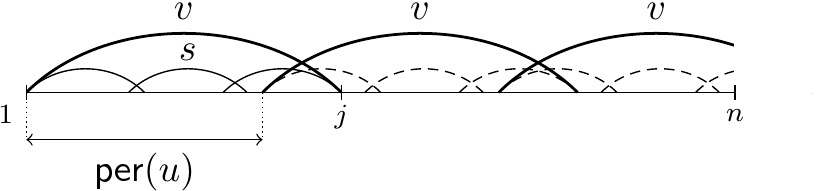}
\end{center}
\caption{\label{fig:lem-cbls-a}
  Illustration of part ($\Leftarrow$) of Lemma \ref{lem:covers-and-lseeds}.
}
\end{figure}

  \begin{lemma}\label{lem:lseed}
    Let $u \in \Sigma^n$ and let $\Cover[1 \dotdot n]$ be its cover array.
    Then:
    \begin{equation}\label{eq:lseed}
      \lseed(u) = \min\{\Cover[j]\ :\ j \ge \per(u)\}.
    \end{equation}
  \end{lemma}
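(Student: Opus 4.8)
The plan is to derive the claimed formula directly from Lemma~\ref{lem:covers-and-lseeds}, by proving the two inequalities separately. Writing $R = \min\{\Cover[j] : j \ge \per(u)\}$, I want to establish $\lseed(u) \le R$ and $\lseed(u) \ge R$; each follows from one direction of the auxiliary lemma together with a basic closure property of covers.

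For $\lseed(u) \le R$, I would fix an arbitrary index $j \ge \per(u)$ and show that $\Cover[j]$ is the length of \emph{some} left seed of $u$, so that $\lseed(u) \le \Cover[j]$; taking the minimum over all such $j$ then yields $\lseed(u) \le R$. To see that $\Cover[j]$ realizes a left seed, note first that the prefix $v = u[1 \dotdot j]$ trivially covers itself, so the longest prefix of $u$ covered by $v$ has length at least $j \ge \per(u)$; by the ($\Leftarrow$) direction of Lemma~\ref{lem:covers-and-lseeds} this makes $v$ a left seed of $u$. Let $c$ be a shortest cover of $v$, so $|c| = \Cover[j]$. Since $c$ is a cover of the left seed $v$, Observation~\ref{obs:properties}(b) gives that $c$ is itself a left seed of $u$, whence $\lseed(u) \le |c| = \Cover[j]$.

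For the reverse inequality $\lseed(u) \ge R$, I would take $s$ to be a shortest left seed of $u$, so $|s| = \lseed(u)$, and let $j$ be the length of the longest prefix of $u$ covered by $s$. Because $s$ is a left seed, Lemma~\ref{lem:covers-and-lseeds} gives $j \ge \per(u)$, so this $j$ is an admissible index in the minimum defining $R$. Moreover, the ``in particular'' part of the same lemma states that $s$ is a shortest cover of $u[1 \dotdot j]$, i.e.\ $\Cover[j] = |s| = \lseed(u)$. Consequently $R \le \Cover[j] = \lseed(u)$, which is exactly $\lseed(u) \ge R$. Combining the two inequalities completes the proof.

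The argument is essentially a repackaging of Lemma~\ref{lem:covers-and-lseeds}, so I do not anticipate a genuine obstacle. The only points that need care are making sure the index $j$ invoked in each direction is legitimately $\ge \per(u)$ --- which is guaranteed by the equivalence in that lemma --- and that passing from the prefix $u[1\dotdot j]$ to a shortest cover of it preserves the left-seed property, which is precisely the content of Observation~\ref{obs:properties}(b).
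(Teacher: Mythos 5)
Your proof is correct and takes essentially the same route as the paper's: both inequalities are extracted from Lemma~\ref{lem:covers-and-lseeds}, using its ``in particular'' part for $\lseed(u)\ge R$ and its ($\Leftarrow$) direction together with Observation~\ref{obs:properties}(b) for $\lseed(u)\le R$. The paper's own proof is merely a terser statement of this same two-sided argument.
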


  \begin{proof}
    By Lemma~\ref{lem:covers-and-lseeds}, the length of the shortest left seed of $u$
    can be found among the values $\Cover[\per(u)],\ldots,\Cover[n]$.
    And conversely, for each of the values $\Cover[j]$ for $\per(u) \le j \le n$, there exists
    a left seed of $u$ of length $\Cover[j]$.
    Thus $\lseed(u)$ equals the minimum of these values, which yields the formula~\eqref{eq:lseed}.
  \qed
  \end{proof}

  \noindent
  Clearly, the formula \eqref{eq:lseed} provides an $O(n)$ time algorithm
  for computing the shortest left seed of the whole string $u$.
  We show that, employing some algorithmic techniques, one can use this formula to compute
  shortest left seeds for all prefixes of $u$, i.e., computing the left-seed array of $u$,
  also in $O(n)$ time.

  \begin{theorem}
    For $u \in \Sigma^n$, its left-seed array can be computed in $O(n)$ time.
  \end{theorem}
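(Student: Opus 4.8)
The plan is to reduce the whole array to a single sliding-window-minimum computation over the cover array $\Cover$, which is already available in $O(n)$ time.

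First I would obtain a per-prefix version of Lemma~\ref{lem:lseed}. Fixing $i$ and applying that lemma to $u[1 \dotdot i]$ in place of $u$, two facts make the formula concrete: $\per(u[1\dotdot i])=\Period[i]$ by definition, and the cover array of $u[1 \dotdot i]$ is just the restriction of $\Cover$ to $\{1,\dots,i\}$, since the shortest cover of $u[1 \dotdot j]$ depends only on $u[1\dotdot j]$ and therefore equals $\Cover[j]$ for every $j \le i$. Substituting into \eqref{eq:lseed} yields the identity
\[
  \LSeed[i] \;=\; \min\{\Cover[j]\ :\ \Period[i] \le j \le i\},
\]
which one can check against Table~\ref{tab:tab1} (e.g.\ $\LSeed[6]=\min\Cover[3\dotdot6]=3$ and $\LSeed[8]=\min\Cover[7\dotdot8]=4$). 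Thus computing the left-seed array amounts to evaluating, for each $i$, the minimum of $\Cover$ over the window $[\Period[i],\,i]$.

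The crucial structural observation is that both endpoints of these windows move monotonically to the right as $i$ grows. The right endpoint is $i$ itself. For the left endpoint, recall the standard fact that the border array increases by at most one per step: if $b=\Border[i]\ge 1$ then $u[1\dotdot b-1]$ is a (proper) border of $u[1 \dotdot i-1]$, so $\Border[i]\le\Border[i-1]+1$, and hence $\Period[i]=i-\Border[i]$ is non-decreasing in $i$. Consequently the windows $[\Period[i],i]$ form a genuine sliding window, and their minima can be computed by the classical monotonic-deque algorithm: maintain a deque of indices with strictly increasing $\Cover$-values; on reaching $i$, pop from the back every index $j$ with $\Cover[j]\ge\Cover[i]$, append $i$, then discard from the front every index smaller than $\Period[i]$, and read $\LSeed[i]=\Cover[\text{front}]$. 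Because each index enters and leaves the deque once and $\Period[i]$ never decreases, the total work is $O(n)$; together with the $O(n)$ precomputation of $\Border$ (hence $\Period$) and of $\Cover$ via the Apostolico--Breslauer algorithm, this proves the theorem.

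I expect the only real obstacle to be recognizing and justifying the monotonicity of the left endpoints $\Period[i]$; without it the intervals would be arbitrary and one would be forced into full offline range-minimum machinery (Cartesian tree plus constant-time LCA) or an $O(n\log n)$ per-query binary search, whereas with it the problem collapses to textbook sliding-window minimum. The remaining steps --- the reduction to range minima and the correctness of the deque --- are routine.
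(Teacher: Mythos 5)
Your proposal is correct and follows essentially the same route as the paper: apply the formula of Lemma~\ref{lem:lseed} prefix-wise to get $\LSeed[i]=\min\{\Cover[j]:\Period[i]\le j\le i\}$, observe that $\Period[i]$ is non-decreasing so the intervals form a sliding window, and compute the window minima with a monotonic deque in $O(n)$ total time. You additionally spell out two points the paper leaves implicit (that the cover array of a prefix is the restriction of $\Cover$, and why $\Period$ is monotone), which only strengthens the argument.
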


  \begin{proof}
    Applying \eqref{eq:lseed} to all prefixes of $u$, we obtain:
    \begin{equation}\label{eq:lseed_array}
      \LSeed[i] = \min\{\Cover[j]\ :\ \Period[i] \le j \le i\}.
    \end{equation}
    Recall that both the period array $\Period[1 \dotdot n]$ and the cover array $\Cover[1 \dotdot n]$
    of $u$ can be computed in $O(n)$ time \cite{DBLP:conf/birthday/ApostolicoB97,DBLP:journals/ipl/Breslauer92,AlgorithmsOnStrings,Jewels}.

    The minimum in the formula \eqref{eq:lseed_array} could be computed by data structures for
    Range-Minimum-Queries \cite{DBLP:conf/escape/FischerH07,DBLP:journals/jda/Sadakane07},
    however in this particular case we can apply a much simpler algorithm.
    Note that $\Period[i-1] \le \Period[i]$,
    therefore the intervals of the form $[\Period[i],i]$ behave like a sliding window, i.e.,
    both their endpoints are non-decreasing.
    We use a bidirectional queue $Q$ which stores left-minimal elements in the current interval
    $[\Period[i],i]$ (w.r.t.\ the value $\Cover[j]$).
    In other words, elements of $Q$ are increasing and if $Q$ during the step $i$ contains an
    element $j$ then $j \in [\Period[i],i]$ and $\Cover[j] < \Cover[j']$ for all $j < j' \le i$.
    We obtain an $O(n)$ time algorithm \mbox{ComputeLeftSeedArray}.
    \qed
  \end{proof}

{\small
      \begin{center}
      \fbox{
      \begin{minipage}{10.5cm}
      \begin{algorithmic}[1]
      \vspace*{0.2cm}
        \REQUIRE ComputeLeftSeedArray($u$)
        \vspace*{0.2cm}
        \STATE $\Period[1 \dotdot n]:=$ period array of $u$;\ \ $\Cover[1 \dotdot n]:=$ cover array of $u$;
        \STATE $Q:=\mathit{emptyBidirectionalQueue}$;
        \FOR{$i:=1$ \textbf{to} $n$}
          \STATE \textbf{while} (\textbf{not} $\mathit{empty}(Q)$) \textbf{and} ($\mathit{front}(Q) < \Period[i]$)\ 
            \textbf{do} $\mathit{popFront}(Q)$;
          \STATE \textbf{while} (\textbf{not} $\mathit{empty}(Q)$) \textbf{and} ($\Cover[\mathit{back}(Q)] \ge \Cover[i]$)\ 
            \textbf{do} $\mathit{popBack}(Q)$;
          \STATE $\mathit{pushBack}(Q, i)$;
          \STATE $\LSeed[i]:=\Cover[\mathit{front}(Q)]$;
          \STATE \COMMENT{ $Q$ stores left-minimal elements of the interval $[\Period[i],i]$ }
        \ENDFOR
        \RETURN $\LSeed[1 \dotdot n]$;
      \vspace*{0.2cm}
      \end{algorithmic}
      \end{minipage}
      }
      \end{center}
}

  \medskip
  \noindent
  Now we proceed to an alternative algorithm computing the left-seed array, which also
  utilizes the criterion from Lemma~\ref{lem:covers-and-lseeds}.
  We start with an auxiliary algorithm \mbox{ComputeR-Array}.
  It computes an array $\R[1 \dotdot n]$ which stores, as $\R[i]$, the length of the longest prefix
  of $u$ for which $u[1 \dotdot i]$ is the shortest cover, $0$ if none.

{\small
      \begin{center}
      \fbox{
      \begin{minipage}{9cm}
      \begin{algorithmic}[1]
      \vspace*{0.2cm}
        \REQUIRE ComputeR-Array($u$)
        \vspace*{0.2cm}
        \STATE $\Cover[1 \dotdot n]:=$ cover array of $u$;
        \STATE \textbf{for} $i:=1$ \textbf{to} $n$ \textbf{do} $\R[i]:=0$;
        \STATE \textbf{for} $i:=1$ \textbf{to} $n$ \textbf{do} $\R[\Cover[i]]:=i$;
        \RETURN $\R[1 \dotdot n]$;
      \vspace*{0.2cm}
      \end{algorithmic}
      \end{minipage}
      }
      \end{center}
}

  \noindent
  The algorithm \mbox{Alternative-ComputeLeftSeedArray} computes the array $\LSeed$ from left to right.
  The current value of $\LSeed[i]$ is stored in the variable $\mathit{ls}$, note that this value
  never decreases (by Observation~\ref{obs:properties}e).
  Equivalently, for each $i$ we have $\LSeed[i-1] \le \LSeed[i] \le i$.

  The particular value of $\LSeed[i]$ is obtained using the necessary and sufficient condition from
  Lemma~\ref{lem:covers-and-lseeds}: $\LSeed[i]=\mathit{ls}$ if $\mathit{ls}$ is the smallest
  number such that $|w| \ge \per(u[1 \dotdot i]) = \Period[i]$, where $w$ is the longest prefix of
  $u[1 \dotdot i]$ that is covered by $u[1 \dotdot \mathit{ls}]$.
  We slightly modify this condition, substituting $w$ with the longest prefix $w'$ of the very word $u$
  that is covered by $u[1 \dotdot \mathit{ls}]$.
  Thus we obtain the condition $\R[\mathit{ls}] \ge \Period[i]$ utilized in the pseudocode below.

{\small
      \begin{center}
      \fbox{
      \begin{minipage}{10cm}
      \begin{algorithmic}[1]
      \vspace*{0.2cm}
        \REQUIRE Alternative-ComputeLeftSeedArray($u$)
        \vspace*{0.2cm}
        \STATE $\Period[1 \dotdot n]:=$ period array of $u$;\ \ $\R[1 \dotdot n]:=$ ComputeR-Array($u$);
        \STATE $\LSeed[0]:=0$;\ \ $\mathit{ls}:=0$;
        \FOR{$i:=1$ \textbf{to} $n$}
          \STATE \COMMENT{ An invariant of the loop: $\mathit{ls}=\LSeed[i-1]$. }
          \STATE \textbf{while} $\R[\mathit{ls}]<\Period[i]$ \textbf{do} $\mathit{ls}:=\mathit{ls}+1$;
          \STATE $\LSeed[i]:=\mathit{ls}$;
        \ENDFOR
        \RETURN $\LSeed[1 \dotdot n]$;
      \vspace*{0.2cm}
      \end{algorithmic}
      \end{minipage}
      }
      \end{center}
}

  \begin{theorem}
    Algorithm \mbox{Alternative-ComputeLeftSeedArray} runs in linear time.
  \end{theorem}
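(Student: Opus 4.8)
The plan is to charge all the work to the monotone growth of the single variable $\mathit{ls}$, once I have verified that everything outside the inner \textbf{while} loop is $O(1)$ and that all preprocessing is linear. First I would observe that the period array $\Period[1\dotdot n]$ is computed in $O(n)$ time, and that \mbox{ComputeR-Array} is linear as well: it invokes the linear-time cover-array computation and then runs two plain $O(n)$ loops. In particular $\R[\Cover[i]]:=i$ is executed with $i$ increasing, so $\R[c]$ ends up equal to $\max\{i:\Cover[i]=c\}$, the length of the longest prefix of $u$ whose shortest cover has length $c$; I also set $\R[0]:=0$ so that the test $\R[\mathit{ls}]<\Period[i]$ is well defined while $\mathit{ls}=0$ (and, since $\Period[i]\ge 1$, this forces the first increment).

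Next I would isolate the two facts that govern the complexity of the main loop. The variable $\mathit{ls}$ is never decreased — it is only incremented inside the \textbf{while} loop — hence it is non-decreasing over the whole execution (consistent with $\LSeed$ being non-decreasing, Observation~\ref{obs:properties}e). The essential point is therefore to bound $\mathit{ls}$ from above and to guarantee that each \textbf{while} loop terminates. For this I would exhibit, for every $i$, the witness $\mathit{ls}=\LSeed[i]$: by Lemma~\ref{lem:covers-and-lseeds} the shortest left seed $u[1\dotdot \LSeed[i]]$ of $u[1\dotdot i]$ is the shortest cover of the prefix $u[1\dotdot j]$, where $j$ is the longest prefix of $u[1\dotdot i]$ covered by it and satisfies $j\ge \per(u[1\dotdot i])=\Period[i]$. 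Thus $\Cover[j]=\LSeed[i]$, which by the definition of $\R$ gives $\R[\LSeed[i]]\ge j\ge\Period[i]$. Consequently the stopping condition $\R[\mathit{ls}]\ge\Period[i]$ is satisfied at the latest when $\mathit{ls}$ reaches $\LSeed[i]$, so each \textbf{while} loop halts with $\mathit{ls}\le\LSeed[i]\le i\le n$; in particular $\mathit{ls}$ never exceeds $n$ and never leaves the index range of $\R$.

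Finally I would run the amortized argument. Each pass of the \textbf{for} loop performs $O(1)$ work beyond the \textbf{while} loop, and every execution of the \textbf{while} body increments $\mathit{ls}$ by exactly one. Since $\mathit{ls}$ starts at $0$, never decreases, and stays bounded by $n$ by the previous paragraph, the total number of executions of the \textbf{while} body summed over all $n$ iterations of the \textbf{for} loop is at most $n$. Hence the entire \textbf{for} loop runs in $O(n)$ time, and together with the linear preprocessing this yields the claimed $O(n)$ bound. The only delicate step is the boundedness and termination of $\mathit{ls}$; everything else is a routine accumulator amortization. I would stress that the running-time analysis needs only the \emph{existence} of the witness $\LSeed[i]$ meeting the stopping condition, not the full correctness statement that the loop halts exactly at $\LSeed[i]$, so the time bound is independent of the correctness argument already given before the pseudocode.
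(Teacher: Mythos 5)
Your proof is correct and uses the same amortization as the paper: every execution of the \textbf{while}-body increments the non-decreasing variable $\mathit{ls}$, which is bounded by $n$, so the total work of all inner loops is $O(n)$ on top of linear preprocessing. You additionally supply a detail the paper's proof leaves implicit --- the witness $\mathit{ls}=\LSeed[i]$ satisfying $\R[\LSeed[i]]\ge\Period[i]$ via Lemma~\ref{lem:covers-and-lseeds}, which guarantees each inner loop actually terminates before $\mathit{ls}$ leaves the index range --- but this does not change the approach.
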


  \begin{proof}
    Recall that the arrays $\Period[1 \dotdot n]$ and $\Cover[1 \dotdot n]$ can be computed in linear time
    \cite{DBLP:conf/birthday/ApostolicoB97,DBLP:journals/ipl/Breslauer92,AlgorithmsOnStrings,Jewels}.
    The array $\R[1 \dotdot n]$ is obviously also computed in linear time.

    It suffices to prove that the total number of steps of the while-loop in the algorithm
    \mbox{Alternative-ComputeLeftSeedArray} is linear in terms of $n$.
    In each step of the loop, the value of $\mathit{ls}$ increases by one; this variable
    never decreases and it cannot exceed $n$.
    Hence, the while-loop performs at most $n$ steps and the whole algorithm runs in $O(n)$ time.
  \qed
  \end{proof}

  \medskip
  \noindent
  Concluding this section, we describe a linear-time algorithm computing the
  longest left-seed array, $\LSeed^M[1 \dotdot n]$, of the string $u \in \Sigma^n$.
  The following lemma gives a simple characterization of the length of the longest
  left seed of the whole string $u$.

  \begin{lemma}\label{lem:LSM}
    Let $u \in \Sigma^n$.
    If $\per(u)<n$ then $\lseedmax(u)=n-1$, otherwise $\lseedmax(u)=0$.
  \end{lemma}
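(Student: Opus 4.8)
The plan is to prove the two cases separately, starting from the definition of longest left seed as the longest proper prefix of $u$ that is also a seed of $u$, and applying the characterization from Lemma~\ref{lem:covers-and-lseeds}. The key structural fact I would use is that, by Lemma~\ref{lem:covers-and-lseeds}, a prefix $s$ of $u$ is a left seed of $u$ if and only if the longest prefix of $u$ covered by $s$ has length at least $\per(u)$.

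For the case $\per(u)<n$: I would exhibit $s=u[1\dotdot n-1]$ as a left seed, which immediately gives $\lseedmax(u)=n-1$ since $n-1$ is the largest possible length of a proper prefix. The prefix $u[1\dotdot n-1]$ trivially covers itself, so the longest prefix of $u$ it covers has length at least $n-1$. Since $\per(u)<n$ means $\per(u)\le n-1$, the condition $j\ge\per(u)$ of Lemma~\ref{lem:covers-and-lseeds} is satisfied with $j=n-1$, so $u[1\dotdot n-1]$ is indeed a left seed. As no proper prefix can be longer, $\lseedmax(u)=n-1$.

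For the case $\per(u)=n$ (the only remaining possibility, since always $\per(u)\le n$): here $u$ has no nontrivial border, and I claim no proper prefix can be a left seed, so $\lseedmax(u)=0$. Take any proper prefix $s=u[1\dotdot k]$ with $k<n$; the longest prefix of $u$ covered by $s$ has length at most $k<n=\per(u)$, so the condition $j\ge\per(u)$ of Lemma~\ref{lem:covers-and-lseeds} fails. Hence $s$ is not a left seed, and since this holds for every proper prefix, by definition $\lseedmax(u)=0$.

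The main subtlety — and the step I would handle most carefully — is the direction in the $\per(u)=n$ case, namely confirming that the longest prefix of $u$ covered by a proper prefix $s=u[1\dotdot k]$ cannot reach length $\per(u)=n$. The point is that any prefix of $u$ covered by $s$ has length strictly less than $n$: if it had length exactly $n$, then $s$ would cover all of $u$, forcing a further occurrence of $s$ that would create a nontrivial border of $u$, contradicting $\per(u)=n$. Thus $j\le k<n=\per(u)$, and Lemma~\ref{lem:covers-and-lseeds} rules $s$ out. Everything else is a direct application of the lemma, so no genuinely hard obstacle remains once this boundary argument is pinned down.
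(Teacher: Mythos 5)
Your proof is correct in substance and follows the same underlying ideas as the paper's: for $\per(u)<n$ the paper explicitly exhibits the covered superstring $u[1\dotdot \per(u)]\cdot u[1\dotdot n-1]$, while you get the same conclusion by feeding $j\ge n-1\ge\per(u)$ into Lemma~\ref{lem:covers-and-lseeds}; for $\per(u)=n$ the paper argues directly that a proper left seed would force a nonempty border, while you route the same border argument through Lemma~\ref{lem:covers-and-lseeds}. Both packagings are fine, and yours arguably reuses the established machinery more systematically.

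One inaccuracy to fix: in the $\per(u)=n$ case you twice assert that the longest prefix of $u$ covered by $s=u[1\dotdot k]$ satisfies $j\le k$. That inequality is false in general, even under the hypothesis $\per(u)=n$: for $u=\mathtt{aab}$ we have $\per(u)=3=n$, yet $s=\mathtt{a}$ covers the prefix $\mathtt{aa}$, so $j=2>1=k$. Fortunately your argument never needs $j\le k$; the border argument you give in the final paragraph correctly establishes the weaker bound $j<n$ (if $j=n$ then $s$ is a nonempty proper border of $u$, forcing $\per(u)=n-\border(u)<n$), and $j<n=\per(u)$ is exactly what is required to invoke Lemma~\ref{lem:covers-and-lseeds}. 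So simply delete the ``$\le k$'' links in the chain and state the conclusion as $j<n=\per(u)$; no genuine gap remains.
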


  \begin{proof}
    First consider the case $\per(u)=n$.
    We show that $\lseed(u)=n$, consequently $\lseedmax(u)$ equals $0$.
    Assume to the contrary that $\lseed(u)<n$.
    Then, a non-empty prefix of the minimal left seed of $u$, say
    $w$, is a suffix of $u$ (consider the occurrence of the left seed that covers $u[n]$).
    Hence, $n - |w|$ is a period of $u$, a contradiction.

    Assume now that $\per(u)<n$.
    Then $u$ is a prefix of the word $u[1 \dotdot \per(u)]\cdot u[1 \dotdot n-1]$
    which is covered by $u[1 \dotdot n-1]$.
    Therefore $u[1 \dotdot n-1]$ is a left seed of $u$, $\lseedmax(u) \ge n-1$,
    consequently $\lseedmax(u)=n-1$.
    \qed
  \end{proof}

  \noindent
  Using Lemma~\ref{lem:LSM} we obtain $\LSeed^M[i] = i-1$ or $\LSeed^M[i] = 0$ for every $i$,
  depending on whether $\Period[i]<i$ or not.
  We obtain the following result.

  \begin{theorem}
    Longest left-seed array of $u \in \Sigma^n$ can be computed in $O(n)$ time.
  \end{theorem}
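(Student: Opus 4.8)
The plan is to reduce the entire computation to a single linear scan built directly on top of Lemma~\ref{lem:LSM}. The key observation is that Lemma~\ref{lem:LSM} characterizes $\lseedmax$ of an \emph{arbitrary} string by a single comparison between its period and its length, and this characterization transfers verbatim to every prefix of $u$.

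First I would apply Lemma~\ref{lem:LSM} to each prefix $u[1\dotdot i]$, viewed as a string of length $i$. Since $\Period[i]=\per(u[1\dotdot i])$ by the definition of the period array, the dichotomy of the lemma becomes: $\LSeed^M[i]=\lseedmax(u[1\dotdot i])$ equals $i-1$ when $\Period[i]<i$, and equals $0$ when $\Period[i]=i$ (equivalently, when $u[1\dotdot i]$ has no non-trivial period). This yields the closed formula $\LSeed^M[i]=i-1$ if $\Period[i]<i$, and $\LSeed^M[i]=0$ otherwise, exactly as stated just before the theorem.

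Next I would recall that the period array $\Period[1\dotdot n]$ is computable in $O(n)$ time \cite{AlgorithmsOnStrings,Jewels}. Given $\Period$, each entry $\LSeed^M[i]$ is decided by the single comparison $\Period[i]<i$ followed by a constant-time assignment, so one left-to-right pass over $i=1,\dots,n$ fills in the whole array. The total running time is therefore $O(n)$.

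I do not anticipate any real obstacle: all the combinatorial content is already packed into Lemma~\ref{lem:LSM}, and what remains is merely the elementary per-prefix instantiation together with the observation that the precomputed period array supplies the required test in constant time at each index.
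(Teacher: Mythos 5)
Your proof is correct and follows exactly the route the paper takes: apply Lemma~\ref{lem:LSM} to each prefix to get $\LSeed^M[i]=i-1$ if $\Period[i]<i$ and $0$ otherwise, then fill the array in one pass after computing $\Period[1\dotdot n]$ in linear time. The paper states this observation immediately before the theorem and omits a formal proof, so your write-up is, if anything, slightly more explicit than the original.
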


  \section{Computing Seeds of Given Length and Seed Array} \label{sec:SeedArray}
  In this section we show an $O(n^2)$ time algorithm computing the seed array $\Seed[1 \dotdot n]$
  of a given string $u \in \Sigma^n$, note that a trivial approach --- computing
  the shortest seed for every prefix of $u$ --- yields $O(n^2\log{n})$ time complexity.
  In our solution we utilize a subroutine: testing whether $u$ has a seed of a given length $k$.
  The following theorem shows that this test can be performed in $O(n)$ time.

  \begin{theorem}\label{thm:seed-of-a-given-length}
    It can be checked whether a given string $u \in \Sigma^n$ has a seed of a given length $k$
    in $O(n)$ time.
  \end{theorem}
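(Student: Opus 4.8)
The plan is to reduce the test directly to the seed characterization supplied by Observation~\ref{obs:maxgap-border-seed} together with Fact~\ref{fact:border-seed}. Combining the two, a factor $s$ of length $k$ is a seed of $u$ if and only if
\[
  k \ge \maxgap(s) \quad\text{and}\quad k \ge \max\bigl(\Period[\firstocc(s)+k-1],\ \Period'[\lastocc(s)]\bigr).
\]
Hence $u$ has a seed of length $k$ exactly when \emph{some} length-$k$ factor passes both tests. The quantities involved are cheap once we know, for each distinct length-$k$ factor, three numbers: its first occurrence, its last occurrence, and the maximum gap between consecutive occurrences. (Note the indices are always legal: $\firstocc(s)+k-1\le n$ and $\lastocc(s)\le n-k+1$.) So the whole task becomes: compute these three statistics for all distinct length-$k$ factors simultaneously, in $O(n)$ total.

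First I would precompute the period array $\Period[1\dotdot n]$ and the suffix period array $\Period'[1\dotdot n]$ in $O(n)$ time, and build the suffix array, its inverse (rank) array, and the $\LCP$ array of $u$, which take $O(n)$ time as well. To group positions by their length-$k$ factor, I would restrict attention to the starting positions $p$ with $p\le n-k+1$ (the only ones carrying a length-$k$ factor), and scan the suffix array, cutting it into maximal runs of consecutive entries whose pairwise $\LCP$ value is at least $k$. Each such run is precisely one equivalence class of equal length-$k$ factors, so assigning a distinct integer name to each run and propagating it through the rank array gives an array $\mathit{name}[p]$, the identifier of the length-$k$ factor starting at position $p$, all in $O(n)$.

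With the naming in hand, the three per-group statistics are obtained in a single left-to-right pass over $p=1,\dots,n-k+1$, maintaining for every group $g$ the position of its previous occurrence. On visiting $p$ with $g=\mathit{name}[p]$, I set $\mathit{first}[g]:=p$ the first time $g$ is seen, update $\maxgap[g]:=\max(\maxgap[g],\,p-\mathit{prev}[g])$ otherwise, and in all cases record $\mathit{prev}[g]:=\mathit{last}[g]:=p$. Because positions are visited in increasing order, occurrences within a fixed group are encountered in increasing order, so consecutive visits correspond to consecutive elements of $\Occ(s,u)$ and the running maximum is exactly $\maxgap(s)$ (with the convention that a single-occurrence group keeps $\maxgap=0$). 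Finally I iterate over all groups and return ``yes'' iff some group satisfies $k\ge\maxgap[g]$ and $k\ge\max(\Period[\mathit{first}[g]+k-1],\Period'[\mathit{last}[g]])$; the degenerate cases $k>n$ (no factor, answer ``no'') and $k=n$ (the whole word, trivially a seed) are handled automatically.

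The main obstacle is the simultaneous linear-time computation of $\maxgap$ over \emph{all} distinct length-$k$ factors: a naive treatment would sort the occurrence list of each factor and cost $O(n\log n)$. The device that avoids this is precisely the $\mathit{name}$ array derived from the suffix-array/$\LCP$ partition, which lets the single ordered scan feed each occurrence to the correct group in amortized $O(1)$ and thus recover every maxgap without any sorting. Everything else is a constant number of $O(n)$ array computations and a final $O(n)$ sweep over the groups, giving the claimed $O(n)$ bound.
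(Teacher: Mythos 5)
Your proposal is correct and follows essentially the same route as the paper: partition the length-$k$ factors into groups via maximal runs of $\LCP$-values at least $k$ in the suffix array, recover each group's occurrence set in increasing position order, and test the maxgap and border-seed conditions from Observation~\ref{obs:maxgap-border-seed} and Fact~\ref{fact:border-seed} using the arrays $\Period$ and $\Period'$. Your single left-to-right scan with the $\mathit{name}$ array is the paper's shared-bucket Bucket Sort in a slightly different guise, and your explicit exclusion of starting positions $p>n-k+1$ is a minor point the paper's pseudocode glosses over.
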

  \begin{proof}
    Assume we have already computed in $O(n)$ time the suffix array $\SUF$ and the $\LCP$ array
    of longest common prefixes, see \cite{AlgorithmsOnStrings}.
    In the algorithm we start by dividing all factors of $u$ of length $k$ into groups
    corresponding to equal words.
    Every such group can be described as a maximal interval $[i \dotdot j]$ in the suffix array
    $\SUF$, such that each of the values $\LCP[i+1],\LCP[i+2],\ldots,\LCP[j]$ is at least $k$.
    The collection of such intervals can be constructed in $O(n)$ time by a single traversal
    of the $\LCP$ and $\SUF$ arrays (lines 1--9 of Algorithm SeedsOfAGivenLength).
    Moreover, using Bucket Sort, we can transform this representation
    into a collection of lists, each of which describes the set $\Occ(v,u)$ 
    for some factor $v$ of $u$, $v \in \Sigma^k$ (lines 10--11 of the algorithm).
    This can be done in linear time, provided that we use the same set of buckets in 
    each sorting and initialize them just once.

    Now we process each of the lists separately, checking the conditions from
    Observation~\ref{obs:maxgap-border-seed}: in lines 14--18 of the algorithm we check the
    ``maxgap'' condition, and in line 19 the ``border seed'' condition, employing
    Fact~\ref{fact:border-seed}.

    Thus, having computed the arrays $\SUF$ and $\LCP$, and the period arrays 
    $\Period[1 \dotdot n]$ and $\Period'[1 \dotdot n]$ of $u$, 
    we can find all seeds of $u$ of length $k$ in $O(n)$ total time.
  \qed
  \end{proof}

  {\small
    \begin{center}
      \fbox{
      \begin{minipage}{11cm}
      \begin{algorithmic}[1]
      \vspace*{0.2cm}
        \REQUIRE SeedsOfAGivenLength($u$, $k$)
        \vspace*{0.2cm}
        \STATE $\Period[1 \dotdot n]:=$ period array of $u$;\ \ $\Period'[1 \dotdot n]:=$ suffix period array of $u$;
        \STATE $\SUF[1 \dotdot n]:=$ suffix array of $u$;\ \ $\LCP[1 \dotdot n]:=$ lcp array of $u$;
        \STATE $\mathit{Lists}:=\mathit{emptyList}$;
        \STATE $j:=1$;
        \WHILE{$j \le n$}
          \STATE $\mathit{List} := \{\SUF[j]\}$;
          \WHILE{$j<n$ \textbf{and} $\LCP[j+1] \ge k$}
            \STATE $j:=j+1$;\ \ $\mathit{List}:=\mathit{append}(\mathit{List},\SUF[j])$;
          \ENDWHILE
          \STATE $j:=j+1$;\ \ $\mathit{Lists}:=\mathit{append}(\mathit{Lists},\mathit{List})$;
        \ENDWHILE
        \FORALL{$\mathit{List}\ \textbf{in}\ \mathit{Lists}$}
          \STATE BucketSort($\mathit{List}$);
          \COMMENT{ using the same set of buckets }
        \ENDFOR
        \FORALL{$\mathit{List}\ \textbf{in}\ \mathit{Lists}$}
          \STATE $\firstocc:=\mathit{prev}:=n$;\ \ $\lastocc:=1;$\ \ $\mathit{covers}:=\mathbf{true}$;
          \FORALL{$i\ \textbf{in}\ \mathit{List}$}
            \STATE $\firstocc:=\min(\firstocc,i)$;\ \ $\lastocc:=\max(\lastocc,i)$;
            \IF{$i>\mathit{prev}+k$}
              \STATE $\mathit{covers}:=\mathbf{false}$;
            \ENDIF
            \STATE $\mathit{prev}:=i$;
          \ENDFOR
          \IF{$\mathit{covers}$ \textbf{and} ($k \ge \max(\Period[\firstocc+k-1],\ \Period'[\lastocc])$)}
            \PRINT ``$u[\firstocc \dotdot (\firstocc+k-1)]$ is a seed of $u$'';
        \vspace*{0.2cm}
          \ENDIF
        \ENDFOR
      \vspace*{0.2cm}
      \end{algorithmic}
      \end{minipage}
      }
      \end{center}
}

  \noindent
  We compute the elements of the seed array $\Seed[1 \dotdot n]$ from left to right, i.e.,
  in the order of increasing lengths of prefixes of $u$.
  Note that $\Seed[i+1] \ge \Seed[i]$ for any $1 \le i \le n-1$, this is due to
  Observation~\ref{obs:properties}d.
  If $\Seed[i+1] > \Seed[i]$ then we increase the current length of the seed
  by one letter at a time, in total at most $n-1$ such operations are performed.
  Each time we query for existence of a seed of a given length using the algorithm
  from Theorem~\ref{thm:seed-of-a-given-length}.
  Thus we obtain $O(n^2)$ time complexity.

  \begin{theorem}
    The seed array of a string $u \in \Sigma^n$ can be computed in $O(n^2)$ time.
  \end{theorem}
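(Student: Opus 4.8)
The plan is to compute the entries $\Seed[1],\Seed[2],\ldots,\Seed[n]$ in this order, using the decision procedure behind Theorem~\ref{thm:seed-of-a-given-length} as a black box. The structural fact that makes this efficient is that the seed array is non-decreasing: since $u[1\dotdot i]$ is a factor of $u[1\dotdot i+1]$, Observation~\ref{obs:properties}d gives $\Seed[i]\le\Seed[i+1]$. I would therefore maintain a single candidate length $k$, initialized to $1$, with the loop invariant that $k=\Seed[i-1]$ when iteration $i$ begins (with the convention $\Seed[0]=1$). To determine $\Seed[i]$ I repeatedly run a boolean variant of \mbox{SeedsOfAGivenLength}$(u[1\dotdot i],k)$ --- one that returns \textbf{true} as soon as some seed of length $k$ is reported --- and while it returns \textbf{false} I increment $k$. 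When the loop stops, $k$ is by construction the smallest feasible length, so I set $\Seed[i]:=k$ and carry $k$ over to iteration $i+1$ without ever resetting it.

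Correctness of a single iteration is immediate from Theorem~\ref{thm:seed-of-a-given-length}: via the criteria of Observation~\ref{obs:maxgap-border-seed} and Fact~\ref{fact:border-seed}, it decides exactly which lengths are feasible seeds of $u[1\dotdot i]$, and since every prefix $u[1\dotdot i]$ trivially has itself as a seed, the inner search always terminates with $k\le i$. The running time needs an amortized argument. A single call costs $O(i)=O(n)$ time, because the suffix array, the $\LCP$ array, and the period arrays $\Period,\Period'$ of the length-$i$ prefix are all recomputed from scratch in linear time. During iteration $i$ the number of calls equals the number of increments of $k$ made in that iteration plus $1$ for the final successful call. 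Since $k$ moves from $\Seed[i-1]$ to $\Seed[i]$ in iteration $i$, the increments telescope and sum to at most $\Seed[n]-\Seed[1]\le n$ over all $i$, while the "$+1$" terms contribute exactly $n$. Hence the total number of calls is $O(n)$, each costing $O(n)$, which yields the claimed $O(n^2)$ bound.

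The only genuinely delicate point I expect is this amortization, and it rests entirely on the fact that $k$ is shared across iterations and only ever increases: because lengths already ruled out for a shorter prefix are never re-examined for a longer one, the total search effort is linear in the number of distinct candidate lengths rather than quadratic. Replacing the shared search by an independent length search for each prefix would degrade the cost back to the naive $O(n^2\log n)$ we are trying to beat. The remaining items --- turning \mbox{SeedsOfAGivenLength} into a decision procedure and confirming that all auxiliary arrays for each prefix are obtainable in $O(i)$ time --- are routine.
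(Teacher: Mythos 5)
Your proposal is correct and follows essentially the same route as the paper: compute $\Seed[1\dotdot n]$ left to right, exploit the monotonicity from Observation~1(d) to share a single non-decreasing candidate length across prefixes, and amortize the at most $n$ increments against $O(n)$-time calls to the decision procedure of Theorem~6. The telescoping argument you spell out is exactly the (more tersely stated) argument in the paper.
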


  \section{Alternative Algorithm for Shortest Seeds}\label{sec:new-seeds}
    In this section we present a new approach to shortest seeds computation
    based on very simple independent processing of disjoint chains in the suffix tree.
    It simplifies the computation of shortest seeds considerably.

    Our algorithm is also based on a slightly modified version of Observation~\ref{obs:maxgap-border-seed},
    formulated below as Lemma~\ref{lem:prefmaxgap-border-seed}, which allows to relax the definition of maxgaps.
    We discuss an algorithmically easier version of maxgaps, called prefix maxgaps, 
    and show that it can substitute $\maxgap$ values when looking for the shortest seed. 
    
    We start by analyzing the ``border seed'' condition.
    We introduce somewhat more abstract representation of sets of factors of $u$, 
    called \emph{prefix families}, and show how to find in them the shortest border seeds of $u$. 
    Afterwards the key algorithm for computing prefix maxgaps is presented.
    Finally, both techniques are utilized to compute the shortest seed. 

    Let us fix the input string $u \in \Sigma^n$.
    For $v \in \Sigma^*$, by $\PREF(v)$ we denote the set of all prefixes of $v$
    and by $\PREF(v,k)$ we denote $\PREF(v) \cap \Sigma^k\Sigma^*$ (\emph{limited prefix subset}).

    Let $\FF$ be a family of limited prefix subsets of some factors of $u$, we call $\FF$ a \emph{prefix family}.
    Every element $\PREF(v,k) \in \FF$ can be represented in a canonical form, by
    a tuple of integers: $(\firstocc(v), \lastocc(v), k, |v|)$.
    Such a representation requires only constant space per element.
    By $\bseed(u,\FF)$ we denote the shortest border seed of $u$ contained in some 
    element of $\FF$.

    \begin{example}
      Let $u = \mathtt{aabaababaabaaba}$ be the example word from Fig.~\ref{fig:border-seed}.
      Let:
      $$
        \FF = 
        \{\PREF(\mathtt{abaab},4),\ \PREF(\mathtt{babaa},4)\} =
        \{(2,10,4,5),\ (6,6,4,5)\}.
      $$
      Note that $\bigcup\FF = \{\mathtt{abaa},\mathtt{abaab},\mathtt{baba},\mathtt{babaa}\}$.
      Then $\bseed(u,\FF) = \mathtt{abaa}$.
    \end{example}

    \noindent
    The proof of the following fact is present implicitly in \cite{DBLP:journals/algorithmica/IliopoulosMP96}
    (type-A and type-B seeds).

    \begin{theorem}\label{thm:min-border-seeds}
      Let $u \in \Sigma^n$ and let $\FF$ be a prefix family given in a canonical form.
      Then $\bseed(u,\FF)$ can be computed in linear time.
    \end{theorem}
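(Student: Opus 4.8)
The plan is to turn the border-seed test of Fact~\ref{fact:border-seed} into two elementary numerical conditions that can be read directly off the canonical tuple, and then to minimize over the whole family by a single linear sweep. First I would fix an element $\PREF(v,k)\in\FF$, represented by $(f,l,k,m)=(\firstocc(v),\lastocc(v),k,|v|)$. Its members are the prefixes $s=v[1\dotdot j]$ with $k\le j\le m$, and for $s=v[1\dotdot j]$ one has $\firstocc(s)=f$ and $\lastocc(s)=l$ whenever all members of $\PREF(v,k)$ share the occurrences of $v$, which is the property enjoyed by the prefix families we construct and which makes the tuple a faithful encoding for border-seed testing. Plugging this into Fact~\ref{fact:border-seed}, $s$ is a border seed of $u$ exactly when $j\ge\Period'[l]$ (the right condition $R$) and $j\ge\Period[f+j-1]$ (the left condition $L$). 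Using $\Period[t]+\Border[t]=t$ I would rewrite $L$ as $\Border[f+j-1]\ge f-1$. Thus $\bseed(u,\FF)$ is the shortest prefix, over all elements, whose length $j$ satisfies $R$, $L$ and $k\le j\le m$.

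The right condition is immediate: it only raises the lower end of the admissible range, so I set $k'=\max(k,\Period'[l])$ in $O(1)$ per element and discard the element when $k'>m$. The whole task then reduces to: for each element find the smallest $j\in[k',m]$ with $\Border[f+j-1]\ge f-1$ --- equivalently the leftmost end position $e\in[f+k'-1,\,f+m-1]$ with $\Border[e]\ge f-1$ --- and finally take the global minimum, remembering the witnessing element so that the seed $s=u[f\dotdot f+j-1]$ itself can be reported. I would precompute $\Period$, $\Period'$ and $\Border$ in $O(n)$ time by the standard failure-function machinery \cite{AlgorithmsOnStrings,Jewels}.

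The heart of the argument, and the step I expect to be the main obstacle, is performing all of these ``leftmost end position in a range with $\Border\ge$ threshold'' searches in \emph{total} linear time; note that $\Border$ is not monotone, so a per-element scan or a generic range-minimum query would be too costly. Here I would exploit two facts: the threshold of an element is exactly $f-1$, and the period array is non-decreasing, hence $\Border[t]=t-\Period[t]$ is well controlled. Bucketing the elements by their threshold $f-1$ and sweeping thresholds from large to small, an end position $e$ becomes \emph{active} precisely once the current threshold drops to $\Border[e]$; since positions are only ever activated during the sweep and never deactivated, maintaining them in a monotone ``next active position'' structure lets each element's query be answered by a single successor lookup, the sweep touching every position at most once. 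This is exactly the bookkeeping underlying the type-A/type-B seeds of \cite{DBLP:journals/algorithmica/IliopoulosMP96}, and it yields the claimed $O(n)$ bound.

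Finally I would check the two degenerate extremes so that the reduction and the sweep agree on them: when $f=1$ the threshold $f-1$ is $0$, the left condition $L$ is vacuous, and the element simply offers a border seed of length $k'=\max(k,\Period'[l])$ (if $k'\le m$); and when $k'=m$ there is a unique candidate length to test. Collecting the per-element minima and returning the global shortest then produces $\bseed(u,\FF)$ in $O(n)$ time, which establishes the theorem.
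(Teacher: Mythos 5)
Your proposal is correct and takes essentially the same route as the paper's own (alternative) proof: the identical rewriting of the border\-/seed test as $\Border[f+j-1]\ge f-1$ on a range of end positions, the same lower-bound adjustment $k'=\max(k,\Period'[l])$, and the same batching of the resulting ``first position in a range with $\Border$ above a threshold'' queries by threshold so that an amortized-linear successor structure (the paper's interval Find-Union answering $\Q(\mathcal{I},c)$ for non-decreasing $c$) handles them all in $O(n)$ time. The one cosmetic difference is that you sweep thresholds from large to small, so positions are \emph{inserted}, whereas the standard interval Find-Union argument works with deletions (non-decreasing $c$); offline the two are equivalent by reversing the timeline, but you should say so, since an incremental ``next active position'' structure with $O(1)$ amortized successor lookups is not standard as stated.
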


\paragraph{{\bf Alternative proof of Theorem~\ref{thm:min-border-seeds}.} }
  There is an alternative algorithm for computing $\bseed(u,\FF)$, 
  based on a special version of Find-Union data structure.
  Recall that $\Border[1 \dotdot n]$ is the border-array of $u$.
  Denote by $\Q({\cal I},c)$ (\emph{first-greater-equal}) a query:
  $$\Q({\cal I},c)\ =\ \min\{ i\; :\; i\in {\cal I},\; \Border[i]\ge c\},$$
  where ${\cal I}$ is a subinterval of $[1 \dotdot n]$.
  We assume that $\min\emptyset = +\infty$.
  A sequence of linear number of such queries, sorted according to non-decreasing
  values of $c$, can be easily answered in linear time, using an interval version of
  Find-Union data structure, see \cite{DBLP:conf/spire/CrochemoreIKRRW10,GT:83}.
  The following algorithm applies the condition for border seed from Fact~\ref{fact:border-seed}
  to every element of $\FF$, with $\Period[\firstocc(s)+|s|-1]$ substituted by 
  $\firstocc(s)+|s|-1-\Border[\firstocc(s)+|s|-1]$.
  We omit the details. \qed

  {\small
      \begin{center}
      \fbox{
      \begin{minipage}{11cm}
      \begin{algorithmic}[1]
      \vspace*{0.2cm}
        \REQUIRE ComputeBorderSeed($u$, $\FF$)
        \vspace*{0.2cm}
        \STATE $\mathit{bseed}:=+\infty$;
        \FORALL{$(\firstocc(v), \lastocc(v), k, |v|) \mathbf{~in~} \FF$, 
          in non-decreasing order of $\firstocc(v)$}
          \STATE $k:=\max(\Period'[\lastocc(v)],\;k)$;
          \STATE $\mathcal{I}:= [\firstocc(v)+k-1,\; \firstocc(v)+|v|-1]$;
          \STATE $\mathit{pos} := \Q(\mathcal{I},\ \firstocc(v)-1)$;
          \STATE $\mathit{bseed}:= \min (\mathit{bseed},\ \mathit{pos}-\firstocc(v)+1)$;
        \ENDFOR
        \RETURN $\mathit{bseed}$;
      \vspace*{0.2cm}
      \end{algorithmic}
      \end{minipage}
      }
      \end{center}
}

\medskip
\paragraph{\bf Computation of the shortest seeds via prefix maxgaps.}
  \noindent
  Let $\Tree(u)$ be the suffix tree of $u$, recall that it can
  be constructed in $O(n)$ time \cite{AlgorithmsOnStrings,Jewels}.
  By $\Nodes(u)$ we denote the set of factors of $u$ corresponding to explicit nodes
  of $\Tree(u)$, for simplicity we identify the nodes with the strings they represent.
  For $v \in \Nodes(u)$, the set $\Occ(v,u)$ corresponds to leaf list of the node $v$
  (i.e., the set of values of leaves in the subtree rooted at $v$), denoted as $\LL(v)$.
  Note that $\firstocc(v)=\min\LL(v)$ and $\lastocc(v)=\max\LL(v)$, and
  such values can be computed for all $v \in \Nodes(u)$ in $O(n)$ time.
  For $v \in \Nodes(u)$, we define the \emph{prefix maxgap} of $v$ as:
  $$
    \prefmaxgap(v)\ =\ \max\{\maxgap(w)\;:\; w \in \PREF(v)\}.
  $$
  Equivalently, $\prefmaxgap(v)$ is the maximum of $\maxgap$ values on the path from $v$
  to the root of $\Tree(u)$.
  We introduce an auxiliary problem:

  \paragraph{\bf Prefix Maxgap Problem:}~\\
  \indent
  given a word $u \in \Sigma^n$,
  compute $\prefmaxgap(v)$ for all $v \in \Nodes(u)$.

\vskip 0.3cm \noindent
  The following lemma (an alternative formulation of Observation~\ref{obs:maxgap-border-seed}) shows that
  prefix maxgaps can be used instead of maxgaps in searching for seeds.
  This is important since computation of prefix maxgaps $\prefmaxgap(v)$ is simple, 
  in comparison with $\maxgap(v)$ --- this is due to the fact that the $\prefmaxgap(v)$ values
  on each path down the suffix tree $\Tree(u)$ are non-decreasing.
  Efficient computation of $\maxgap(v)$ requires using augmented height-balanced trees \cite{DBLP:conf/cpm/BrodalP00}
  or other rather sophisticated techniques \cite{DBLP:journals/iandc/BerkmanIP95}.
  The shortest-seed algorithm in \cite{DBLP:journals/algorithmica/IliopoulosMP96} also
  computes prefix maxgaps instead of maxgaps, however this observation is missing in
  \cite{DBLP:journals/algorithmica/IliopoulosMP96}.

  \begin{lemma}\label{lem:prefmaxgap-border-seed}
    Let $s$ be a factor of $u \in \Sigma^*$ and let $w$ be the shortest element
    of $\Nodes(u)$ such that $s \in \PREF(w)$.
    The word $s$ is a seed of $u$ if and only if $|s| \ge \prefmaxgap(w)$
    and $s$ is a border seed of $u$.
  \end{lemma}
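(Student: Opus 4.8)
The plan is to reduce Lemma~\ref{lem:prefmaxgap-border-seed} to the already-established Observation~\ref{obs:maxgap-border-seed}, whose statement differs only in that it uses $\maxgap(s)$ where the present lemma uses $\prefmaxgap(w)$. By Observation~\ref{obs:maxgap-border-seed}, $s$ is a seed of $u$ if and only if $|s| \ge \maxgap(s)$ and $s$ is a border seed of $u$. Since the ``border seed'' condition appears verbatim in both statements, it suffices to show that, under the hypothesis that $s$ is a border seed of $u$, the inequality $|s| \ge \maxgap(s)$ holds if and only if $|s| \ge \prefmaxgap(w)$. First I would record the two elementary facts I will lean on: $\prefmaxgap(w) = \max\{\maxgap(x) : x \in \PREF(w)\}$ by definition, and every factor of $u$ of a given length occurs exactly where its extensions do, so $\Occ$ (and hence $\maxgap$) is monotone along any prefix chain.

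The crux is the direction $|s| \ge \prefmaxgap(w) \iff |s| \ge \maxgap(s)$, and the main obstacle is controlling how $\maxgap(x)$ behaves as $x$ ranges over the prefixes of $w$ that extend $s$. The key observation is that $w$ is chosen as the \emph{shortest} element of $\Nodes(u)$ with $s \in \PREF(w)$; this means $s$ and $w$ have the same set of occurrences, i.e.\ $\Occ(s,u) = \Occ(w,u)$, because there is no branching node strictly between $s$ and $w$ in $\Tree(u)$, so every occurrence of $s$ extends to an occurrence of $w$. Consequently $\maxgap(s) = \maxgap(w)$. I would then argue that among all prefixes $x$ of $w$, the occurrence sets only shrink as $x$ grows (a longer pattern occurs at a subset of the positions where a shorter one does), hence $\maxgap$ is non-decreasing along the path, and the maximum defining $\prefmaxgap(w)$ is attained at the longest relevant prefix. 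More precisely, for the purpose of the seed condition one only needs prefixes of length at least $|s|$, and for those $\maxgap(x) \le \maxgap(s) = \maxgap(w)$ is what makes $\prefmaxgap$ a \emph{relaxation} that is safe to use.

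Assembling the equivalence, I would proceed as follows. For the forward implication, suppose $|s| \ge \prefmaxgap(w)$; since $s \in \PREF(w)$ we have $\maxgap(s) \le \prefmaxgap(w) \le |s|$, giving the maxgap condition of Observation~\ref{obs:maxgap-border-seed}, which together with the border-seed hypothesis yields that $s$ is a seed. For the reverse, suppose $s$ is a seed; by Observation~\ref{obs:maxgap-border-seed} we get $|s| \ge \maxgap(s)$ and $s$ is a border seed. Here the work is to upgrade $\maxgap(s)$ to $\prefmaxgap(w)$: I must show $\prefmaxgap(w) \le |s|$. This is where the monotonicity pays off. Any prefix $x$ of $w$ either has length less than $|s|$, in which case $x$ is a proper prefix of $s$ and $\maxgap(x) \le \maxgap(s) \le |s|$ fails in general, so I must be careful. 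The resolution is that a shorter prefix $x$ of length $\ell < |s|$ contributes $\maxgap(x)$ to $\prefmaxgap(w)$, and for $s$ to be a seed one actually needs $|s| \ge \maxgap(x)$ for every such $x$ as well; this is exactly the content of Observation~\ref{obs:maxgap-border-seed} applied along the chain, since any occurrence of $s$ covering a gap forces the corresponding shorter-prefix gaps to be covered too. Thus the genuinely new step is verifying that the seed condition for $s$ propagates down to bound all the ancestor maxgaps simultaneously, which is precisely the statement $\prefmaxgap(w) \le |s|$.

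I expect the monotonicity-and-propagation step to be the main obstacle, because it requires arguing that $|s| \ge \maxgap(s)$ alone does not obviously bound $\maxgap(x)$ for shorter prefixes $x$, and one must instead use the full force of ``$s$ is a seed'' to control every gap along the path to the root. The cleanest route, which I would adopt, is to prove directly that if $s$ is a seed of $u$ then $|s| \ge \maxgap(x)$ for all $x \in \PREF(s)$: indeed an occurrence of $s$ that covers a position in $u$ simultaneously covers that position with each of its prefixes occurring at the same starting point, so the covering property of $s$ forces the gap between consecutive occurrences of any prefix $x$ of $s$ to be at most $|s|$. Taking the maximum over all $x \in \PREF(w) = \PREF(s) \cup \{s\text{-extensions}\}$, and using $\Occ(s,u)=\Occ(w,u)$ for the extensions, gives $\prefmaxgap(w) \le |s|$, closing the reverse direction and completing the equivalence.
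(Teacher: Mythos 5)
Your final assembled argument does land on the paper's proof: the ($\Leftarrow$) direction is exactly $\maxgap(s)=\maxgap(w)\le\prefmaxgap(w)\le|s|$ followed by Observation~\ref{obs:maxgap-border-seed}, and the ($\Rightarrow$) direction reduces to showing that a seed $s$ satisfies $|s|\ge\maxgap(x)$ for every $x\in\PREF(s)$, the prefixes of $w$ longer than $s$ being handled via $\Occ(x,u)=\Occ(w,u)=\Occ(s,u)$. But the route you take to get there contains a claim that is simply false and must be excised, not hedged: it is not true that ``$\maxgap$ is non-decreasing along the path'' as the prefix grows, nor that ``the maximum defining $\prefmaxgap(w)$ is attained at the longest relevant prefix.'' Although $\Occ(x,u)$ shrinks as $x$ grows, deleting the \emph{first} or \emph{last} element of a set can strictly decrease its maxgap: for $u=\texttt{abcab}$ we have $\maxgap(\texttt{a})=3$ but $\maxgap(\texttt{abc})=0$. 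If your monotonicity claim held, then $\prefmaxgap(w)$ would equal $\maxgap(s)$ and the lemma would be a vacuous restatement of Observation~\ref{obs:maxgap-border-seed}; the whole reason $\prefmaxgap$ is introduced is that this fails. You do notice the failure midway (``fails in general, so I must be careful''), but your replacement appeal to ``Observation~\ref{obs:maxgap-border-seed} applied along the chain'' is not a valid step either: applied to a prefix $x$, that observation compares $\maxgap(x)$ with $|x|$, not with $|s|$, and presupposes that $x$ is a seed, which is not given.

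What actually carries the ($\Rightarrow$) direction is the covering argument sketched in your last paragraph, and there you are too quick. The sentence ``an occurrence of $s$ covers that position with each of its prefixes occurring at the same starting point'' is not literally true (a prefix $x$ at position $p$ covers only $p,\ldots,p+|x|-1$), and the real content is the case analysis the paper performs: take consecutive occurrences $a<b$ of $x\in\PREF(s)$ with $b-a=\maxgap(x)>|s|$ and ask what covers position $b-1$. A full occurrence of $s$, or a sufficiently long prefix of $s$ overhanging the right end of $u$, starting in $(a,b)$ would produce an occurrence of $x$ strictly between $a$ and $b$; an occurrence starting at $a$ or earlier reaches only position $a+|s|-1\le b-2$; and a suffix of $s$ aligned with the left end of $u$ covers only positions $\le|s|-1<b-1$. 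Hence $b-1$ is uncovered, a contradiction. These overhanging occurrences, permitted by the definition of a seed, are precisely where the care is needed, and your one-sentence justification glosses over them.
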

  \begin{proof}
    If $s$ corresponds to an element of $\Nodes(u)$, then $s=w$. 
    Otherwise, $s$ corresponds to an implicit node in an edge in the suffix tree, and 
    $w$ is the lower end of the edge. 
    Note that in both cases we have $\prefmaxgap(w) \ge \maxgap(w) = \maxgap(s)$. 
    By Observation~\ref{obs:maxgap-border-seed},
    this implies part ($\Leftarrow$) of the conclusion.
    As for the part ($\Rightarrow$), it suffices to show that $|s| \ge \prefmaxgap(w)$.

    Assume, to the contrary, that $|s| < \prefmaxgap(w)$.
    Let $v \in \PREF(w) \cap \Nodes(u)$ be the word for which $\maxgap(v)=\prefmaxgap(w)$,
    and let $a,b$ be consecutive elements of the set $\Occ(v,u)$ for which $a+\maxgap(v)=b$.

    Let us note that no occurrence of $s$ starts at any of the positions $a+1,\ldots,b-1$.
    Moreover, none of the suffixes of the form $u[i \dotdot n]$, for $a+1 \le i \le b-1$,
    is a prefix of $s$. 
    Indeed, $v$ is a prefix of $s$ of length at most $n-b+1$, 
    and such an occurrence of $s$ (or its prefix) would imply an extra occurrence of $v$. 
    Note that at most $|s| \le b-a-1$ first positions in the interval
    $[a,b]$ can be covered by an occurrence of $s$ in $u$ (at position $a$ or earlier) or by
    a suffix of $s$ which is a prefix of $u$.
    Hence, position $b-1$ is not covered by $s$ at all, a contradiction.
    \qed
  \end{proof}

  \noindent
  By Lemma~\ref{lem:prefmaxgap-border-seed}, to complete the shortest seed algorithm
  it suffices to solve the Prefix Maxgap Problem (this is further clarified in the
  ComputeShortestSeed algorithm below).
  For this, we consider the following problem.
  By $\SORT(X)$ we denote the sorted sequence of elements of $X\subseteq \{1,2,\ldots,n\}$.
\vskip 0.3cm
  \smallskip
  \noindent
  {\bf Chain Prefix Maxgap Problem}

  {\bf Input:}\ a family of disjoint sets $X_1, X_2,\ldots, X_k\subseteq \{1,2,\ldots,n\}$ \\
  \hspace*{1.6cm} together with $\SORT(X_1 \cup X_2 \cup \ldots \cup X_k)$.\\
  \hspace*{1.6cm} The size of the input is $m = \sum|X_i|$.

  {\bf Output:}\ the numbers $\prefmaxgap_i\ =\ \max_{j\le i}\; \maxgap(X_j\cup X_{j+1}\cup\ldots\cup X_k)$.
\vskip 0.3cm
  \begin{theorem}\label{thm:cpmp}
    The Chain Prefix Maxgap Problem can be solved in $O(m)$ time
    using an auxiliary array of size $n$.
  \end{theorem}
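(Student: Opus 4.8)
The plan is to reduce the problem to maintaining the maximum gap of a \emph{shrinking} set under single-element deletions. Write $S_j = X_j \cup X_{j+1} \cup \cdots \cup X_k$, so that $S_1 \supseteq S_2 \supseteq \cdots \supseteq S_k$ and $\prefmaxgap_i = \max_{j\le i}\maxgap(S_j)$; it therefore suffices to compute $\maxgap(S_j)$ for every $j$ and then take a prefix maximum in $O(m)$ time. I would sweep $j = 1,2,\ldots,k$, starting from the full set $S_1$ and successively deleting the elements of $X_1$, then $X_2$, and so on, so that at the beginning of step $i$ the current set is exactly $S_i$. The governing fact is that deleting one interior point merges the two gaps adjacent to it into a single strictly larger gap and changes no other gap; hence, as long as the current set keeps at least two elements, its maximum gap never decreases and always equals the largest gap created so far. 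I would thus keep a single running value $M$: initialise it to $\maxgap(S_1)$ by one scan of the sorted union, and after deleting an interior point with neighbours $\ell, r$ update $M := \max(M,\, r-\ell)$, while deletions of the current minimum or maximum create no internal gap and leave $M$ unchanged.

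To realise the deletions in constant time I would keep the current set as a doubly linked list over $\SORT(X_1\cup\cdots\cup X_k)$, stored in forward/backward pointer tables indexed by the values in $\{1,\ldots,n\}$ — the auxiliary storage of size $O(n)$. Deleting $x$ then reads $\ell=\mathit{prev}[x]$ and $r=\mathit{next}[x]$, splices $x$ out, and possibly updates $M$, all in $O(1)$. Because only the $m$ values occurring in the chain are ever touched, these entries can be cleared in $O(m)$ time once the chain is finished, which is exactly what lets a single size-$n$ table serve all chains of the suffix tree while the work per chain stays $O(m)$.

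The delicate point, and the step I expect to be the main obstacle, is that $\maxgap$ is \emph{not} monotone under deletion once a set shrinks to at most one element, where it collapses to $0$. Inside the single batch in which the current set crosses from two elements down to one, the deletions may create large gaps that pollute $M$ with values exceeding $\maxgap(S_j)$ of every genuine set $S_j$. The resolution is to read off, at the start of step $i$, the value $g_i := M$ only when $|S_i|\ge 2$, and $g_i := 0$ otherwise (the running size is maintained by subtracting $|X_i|$ at each step). I would then argue correctness as follows: whenever $|S_i|\ge 2$, all genuine sets and all intermediate sets met on the way from $S_1$ down to $S_i$ have at least two elements, since sizes are non-increasing; the sweep has stayed entirely in the monotone regime, so $M$ equals $\maxgap(S_i)$ exactly. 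Whenever $|S_i|\le 1$ we have $\maxgap(S_i)=0$ by definition, and the polluted value of $M$ is correctly discarded. As pollution can occur only in the unique crossing batch, after which every remaining $S_j$ is a singleton contributing $0$, the prefix maximum $\prefmaxgap_i = \max_{j\le i} g_j$ is unaffected.

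Finally, for the complexity: the initialising scan costs $O(m)$, each of the at most $m$ deletions costs $O(1)$, and both the prefix-maximum pass and the cleanup of the pointer tables cost $O(m)$, which yields the claimed $O(m)$ bound using an auxiliary array of size $n$ reused across all chains.
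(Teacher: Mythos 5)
Your overall architecture is the paper's: a doubly linked list threaded through $\SORT(X_1\cup\cdots\cup X_k)$, a size-$n$ position/pointer table reused across chains, batch deletions of $X_1,X_2,\ldots$, and recording of merged gaps in $O(1)$ per deletion. However, your correctness argument contains a genuine error. The claim that, as long as at least two elements remain, the maximum gap of the current set ``never decreases and always equals the largest gap created so far'' is false: deleting the current \emph{minimum} or \emph{maximum} removes a boundary gap outright (it merges nothing), and if that gap was the unique maximum, the true maxgap strictly decreases while your running value $M$ keeps the stale, larger number. Consequently $M$ can record a \emph{transient} gap that is not a gap of any set $S_j$, and your only safeguard --- discarding $M$ when $|S_i|\le 1$ --- does not catch this, because the surviving set can still be large. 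Concretely, take $S_1=\{1,50,99,100,101\}$ and $X_1=\{1,50,99\}$, so $S_2=\{100,101\}$; then $\maxgap(S_1)=49$, $\maxgap(S_2)=1$, and $\prefmaxgap_2=49$. If the batch deletes $50$ first, its neighbours are $1$ and $99$, so you set $M:=\max(49,98)=98$; the subsequent deletions of $1$ and $99$ are extreme elements and leave $M$ untouched, and since $|S_2|=2$ you output $g_2=98$ instead of $49$. (Your algorithm's output even depends on the deletion order within a batch: deleting $1,50,99$ in that order happens to give the right answer.)

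The missing idea is precisely the paper's marking step: before deleting a batch $X_{j-1}$, mark all of its elements, and when splicing out an element record the merged gap $q-p$ \emph{only if both new neighbours $p$ and $q$ are unmarked}, i.e.\ only if they both survive into $S_j$, so that $q-p$ is guaranteed to be an actual gap of $S_j$. With that filter every recorded value is a genuine gap of some $S_j$ (soundness), and every new gap of $S_j$ is recorded when the last marked element between its endpoints is spliced out (completeness), independently of the deletion order. In your example the neighbours $1$ and $99$ of $50$ are both marked, so the spurious value $98$ is never recorded. The rest of your write-up --- the prefix-maximum pass, the $O(1)$ deletions, and the $O(m)$ cleanup of the shared size-$n$ table --- is fine once this filter is in place.
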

  \begin{proof}
    Initially we have the list $L = \SORT(X_1 \cup X_2 \cup \ldots \cup X_k)$.
    Let $\mathit{pred}$ and $\mathit{suc}$ denote the predecessor and successor of an element of $L$.
    The elements of $L$ store a Boolean flag $\mathit{marked}$, initially set to false.
    In the algorithm we use an auxiliary array $\mathit{pos}[1 \dotdot n]$ such that
    $\mathit{pos}[i]$ is a pointer to the element of value $i$ in $L$, if there is
    no such element then the value of $\mathit{pos}[i]$ can be arbitrary.
    Obviously the algorithm takes $O(m)$ time.
    \qed
  \end{proof}

    {\small
      \begin{center}
      \fbox{
      \begin{minipage}{8.5cm}
      \begin{algorithmic}[1]
      \vspace*{0.2cm}
        \REQUIRE ChainPrefixMaxgap($L$)
        \vspace*{0.2cm}
        \STATE $\prefmaxgap_1:=\maxgap(L)$; \COMMENT{ naive computation }
        \FOR{$j:=2$ \textbf{to} $k$}
          \STATE $\prefmaxgap_j:=\prefmaxgap_{j-1}$;
          \STATE \textbf{for all} $i$ \textbf{in} $X_{j-1}$ \textbf{do} $\mathit{marked}(\mathit{pos}[i]):=\mathbf{true}$;
          \FORALL{$i$ \textbf{in} $X_{j-1}$}
            \STATE $p:=\mathit{pred}(\mathit{pos}[i])$;\ \ $q:=\mathit{suc}(\mathit{pos}[i])$;
            \IF{($p \ne \textbf{nil}$) \textbf{and} ($q \ne \textbf{nil}$) \textbf{and}
                         (\textbf{not} $\mathit{marked}(p)$) \\ \textbf{and} (\textbf{not} $\mathit{marked}(q)$)}
              \STATE $\prefmaxgap_j:=\max(\prefmaxgap_j,\ \mathit{value}(q)-\mathit{value}(p))$;
            \ENDIF
            \STATE $\mathit{delete}(L,\mathit{pos}[i])$;
          \ENDFOR
        \ENDFOR
      \vspace*{0.2cm}
      \end{algorithmic}
      \end{minipage}
      }
      \end{center}
    }

 \begin{theorem}\label{thm:pmp}
   The Prefix Maxgap Problem can be reduced to a collection of Chain Prefix Maxgap Problems of total size $O(n \log n)$.
 \end{theorem}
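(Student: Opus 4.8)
The plan is to decompose the suffix tree $\Tree(u)$ into vertical chains by a heavy-path decomposition and to turn each chain into a single instance of the Chain Prefix Maxgap Problem. Recall that $\prefmaxgap(v)$ is the maximum of $\maxgap(w)$ over all ancestors $w$ of $v$ (including $v$ itself), and that $\maxgap(w)=\maxgap(\LL(w))$. Fix a chain $c_1\to c_2\to\cdots\to c_t$ oriented from the top (closest to the root) to the bottom, where each $c_{j+1}$ is the heavy child of $c_j$. For $j<t$ set $X_j=\LL(c_j)\setminus\LL(c_{j+1})$, i.e.\ the leaves hanging off the light children of $c_j$, and set $X_t=\LL(c_t)$. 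The sets $X_1,\ldots,X_t$ are pairwise disjoint, they partition $\LL(c_1)$, and by construction $X_j\cup X_{j+1}\cup\cdots\cup X_t=\LL(c_j)$ for every $j$. Hence feeding $\SORT(\LL(c_1))$ together with this partition to the algorithm of Theorem~\ref{thm:cpmp} returns, for each $i$, the value $\max_{j\le i}\maxgap(\LL(c_j))=\max_{j\le i}\maxgap(c_j)$, which is precisely the part of $\prefmaxgap(c_i)$ contributed by the chain.

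It remains to reinsert the contribution of the ancestors lying strictly above the chain. Let $p$ be the parent of the chain top $c_1$ (then $p$ belongs to an earlier chain, or $c_1$ is the root and no such $p$ exists). Since the ancestors of $c_i$ are exactly $c_1,\ldots,c_i$ together with the ancestors of $p$, we have $\prefmaxgap(c_i)=\max\bigl(\prefmaxgap(p),\ \max_{j\le i}\maxgap(c_j)\bigr)$, so once $\prefmaxgap(p)$ is known the per-chain outputs are turned into the final $\prefmaxgap$ values by a single $\max$ with this base value (taking the base to be $0$ for the root chain). Processing the chains in top-down order, e.g.\ by increasing depth of their top nodes, guarantees that $\prefmaxgap(p)$ has already been computed when the chain of $c_1$ is handled, so the whole collection of subproblems is assembled consistently.

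Next I would bound the total size. The size of the subproblem built for a chain $C$ equals $\sum_j|X_j|=|\LL(\mathrm{top}(C))|$, so the total size is $\sum_{C}|\LL(\mathrm{top}(C))|$. A leaf $\ell$ is counted in $|\LL(\mathrm{top}(C))|$ exactly when $\mathrm{top}(C)$ is an ancestor of $\ell$, i.e.\ once for every chain met on the root-to-$\ell$ path. By the standard property of heavy-path decomposition this path crosses $O(\log n)$ light edges and hence meets $O(\log n)$ chains. Summing over all $n$ leaves yields total size $O(n\log n)$, as required, and the naive $\maxgap(L)$ step of Theorem~\ref{thm:cpmp} costs $O(|L|)$ per chain, staying within the same budget.

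The main obstacle I expect is producing the sorted inputs $\SORT(\LL(\mathrm{top}(C)))$ for all chains within the same $O(n\log n)$ bound, rather than re-sorting each chain from scratch. I would obtain them all at once by scanning the positions $1,2,\ldots,n$ in increasing order and, for each position $\ell$, appending $\ell$ to the list of every chain whose top is an ancestor of $\ell$; because the scan is in increasing order, each chain's list comes out already sorted, and the total work matches the $O(n\log n)$ size bound established above. The leaf lists and the values $\firstocc$, $\lastocc$ underlying this construction are all available from $\Tree(u)$ after $O(n)$ preprocessing, which completes the reduction.
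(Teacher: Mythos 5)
Your proof is correct and follows essentially the same route as the paper's: a heavy-path (heaviest-path) decomposition of the tree, with each path reduced to one Chain Prefix Maxgap instance whose sets are the leaf lists hanging off the path, combined top-down with the $\prefmaxgap$ of the node above the path, and the $O(n\log n)$ bound obtained from the standard fact that each leaf lies in $O(\log n)$ chains. The paper phrases this as a recursion into hanging subtrees of at most half the size with logarithmic depth, which is an equivalent formulation; your explicit treatment of the base value $\prefmaxgap(p)$ and of producing all sorted leaf lists by one increasing scan fills in details the paper leaves implicit.
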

 \begin{proof}
   We solve a more abstract version of the Prefix Maxgap Problem.
   We are given an arbitrary tree $T$ with $n$ leaves annotated with distinct
   integers from the interval $[1,n]$, and we need to compute the values $\prefmaxgap(v)$
   for all $v \in \Nodes(T)$, defined as follows:
   $\maxgap(v) = \maxgap(\LL(v))$, where $\LL(v)$ is the leaf list of $v$, and $\prefmaxgap(v)$
   is the maximum of the values $\maxgap$ on the path from $v$ to the root of $T$.
   We start by sorting $\LL(\mathit{root}(T))$, which can be done in $O(n)$ time.
   Throughout the algorithm we store a global auxiliary array $\mathit{pos}[1 \dotdot n]$,
   required in the ChainPrefixMaxgap algorithm.
   
   Let us find a \emph{heaviest path} $P$ in $T$, i.e., a path from the root down to a leaf, 
   such that all \emph{hanging} subtrees are of size at most $|T|/2$ each.
   The values of $\prefmaxgap(v)$ for $v\in P$ can all be computed in $O(n)$ time, using a reduction
   to the Chain Prefix Maxgap Problem (see Fig.~\ref{fig:cpmp_pmp}).
   
\begin{figure}
\begin{center}
\includegraphics{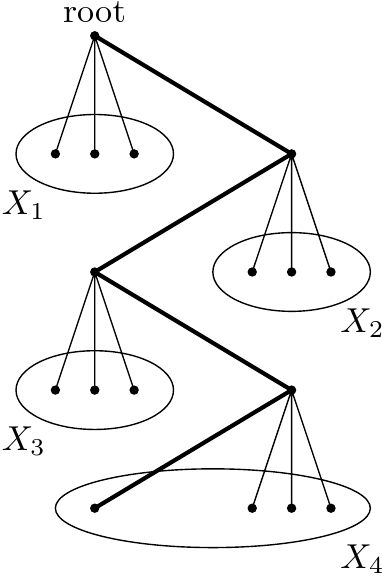}
\end{center}
\caption{\label{fig:cpmp_pmp}
  A tree with an example heaviest path $P$ (in bold).
  The values $\prefmaxgap(v)$ for $v \in P$ can be computed using a reduction
  to the Chain Prefix Maxgap Problem with the sets $X_1$ through $X_4$.
}
\end{figure}
   
   Then we perform the computation recursively for the hanging subtrees, 
   previously sorting $\LL(T')$ for each hanging subtree $T'$.
   Such sorting operations can be performed in $O(n)$ total time for all hanging subtrees.

   At each level of recursion we need a linear amount of time, 
   and the depth of recursion is logarithmic.
   Hence, the total size of invoked Chain Prefix Maxgap Problems is $O(n\log n)$.
   \qed
 \end{proof}
    
    \noindent
    Now we proceed to the shortest seed computation.
    In the algorithm we consider all factors of $u$, dividing them into groups corresponding
    to elements of $\Nodes(u)$.
    Let $w \in \Nodes(u)$ and let $v$ be its parent.
    Let $s \in \PREF(w)$ be a word containing $v$ as a proper prefix, i.e., $s \in \PREF(w,|v|+1)$.
    By Lemma~\ref{lem:prefmaxgap-border-seed}, the word $s$ is a seed of $u$ if and only if
    $|s| \ge \prefmaxgap(w)$ and $s$ is a border seed of $u$.
   
    Using the previously described reductions (Theorems~\ref{thm:min-border-seeds}--\ref{thm:pmp}),
    we obtain the following algorithm:

{\small
      \begin{center}
      \fbox{
      \begin{minipage}{11cm}
      \begin{algorithmic}[1]
      \vspace*{0.2cm}
        \REQUIRE ComputeShortestSeed($u$)
        \vspace*{0.2cm}
        \STATE Construct the suffix tree $T(u)$ for the input string $u$;
        \STATE Solve the Prefix Maxgap Problem for $T(u)$ using the ChainPrefixMaxgap 
        \STATE \hspace*{0.4cm} algorithm --- in $O(n\log n)$ total time (Theorems~\ref{thm:cpmp} and \ref{thm:pmp});
        \STATE $\FF:=\{\ \PREF(w,\; \max(|v|+1, \prefmaxgap(w)))\ :\ (v,w)\mbox{ is an edge in }\Tree(u)\ \};$
        \RETURN $\bseed(u,\FF)$;\ \ \COMMENT{ Theorem~\ref{thm:min-border-seeds} }
      \vspace*{0.2cm}
      \end{algorithmic}
      \end{minipage}
      }
      \end{center}
}

  \noindent
  Observe that the {\em workhorse} of the algorithm is the chain version
  of the Prefix Maxgap Problem, which has a fairly simple linear time solution.
  The main problem is of a structural nature, we have a collection of very simple
  problems each computable in linear time but the total size is not linear.
  This identifies the bottleneck of the algorithm from the complexity point of view.

  \section{Long Seeds}\label{sec:long-seeds}
  Note that the most time-expensive part of the ComputeShortestSeed algorithm is the computation of
  prefix maxgaps, all the remaining operations are performed in $O(n)$ time.
  Using this observation we can show a more efficient algorithm computing the shortest seed
  provided that its length $m$ is sufficiently large.
  For example if $m = \Theta(n)$ then we obtain an $O(n)$ time algorithm for the shortest seed.
  
  \begin{theorem}
    One can check if the shortest seed of a given string $u$ has length at least $m$
    in $O(n\log{(n/m)})$ time, where $n=|u|$.
    If so, a corresponding seed can be reported within the same time complexity.
  \end{theorem}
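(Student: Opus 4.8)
The plan is to reuse the \textsf{ComputeShortestSeed} algorithm essentially unchanged and to accelerate only its single super-linear ingredient. Every other step---the construction of $\Tree(u)$, the assembly of the prefix family $\FF$ (one element per edge), and the border-seed search $\bseed(u,\FF)$ of Theorem~\ref{thm:min-border-seeds}---already runs in $O(n)$ time; the sole $O(n\log n)$ cost is the solution of the Prefix Maxgap Problem through the reduction of Theorem~\ref{thm:pmp}. Hence it suffices to produce, in $O(n\log(n/m))$ time, just enough prefix-maxgap information to test the criterion of Lemma~\ref{lem:prefmaxgap-border-seed} on every candidate of length at least $m$, and, as a by-product, to decide whether a seed shorter than $m$ exists.

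The enabling observation is that maxgaps may be \emph{capped} at $m$. By Lemma~\ref{lem:prefmaxgap-border-seed}, a factor $s$ with $|s|\ge m$ passes the maxgap part of the seed test as soon as $\prefmaxgap(w)\le|s|$, where $w$ is the shortest element of $\Nodes(u)$ with $s\in\PREF(w)$; this holds automatically whenever $\prefmaxgap(w)\le m$. Thus, when searching for seeds of length at least $m$, we never need the precise value of $\prefmaxgap(w)$ once it drops to $m$ or below, and the only gaps that must be measured exactly are those between consecutive occurrences that exceed $m$.

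First I would run the heavy-path recursion of Theorem~\ref{thm:pmp}, but prune it: a hanging subtree is expanded only as long as it contains more than $m$ leaves. Because the number of leaves of a hanging subtree is at most half that of the current call, after $O(\log(n/m))$ levels every surviving subtree has at most $m$ leaves and the recursion halts. Each level emits Chain Prefix Maxgap instances (Theorem~\ref{thm:cpmp}) of total size $O(n)$, so their aggregate size is $O(n\log(n/m))$; together with the linear remainder of \textsf{ComputeShortestSeed} this gives the stated running time. The pruned run computes $\prefmaxgap(v)$ exactly for every $v$ on a heavy path of an expanded subtree and retains $\prefmaxgap$ of the root of each pruned subtree.

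The hard part is to prove that pruning does not hide the shortest seed when that seed has length at least $m$. I would lean on two facts. The candidate length $\max(|v|+1,\prefmaxgap(w))$ attached to an edge $(v,w)$ is non-decreasing along every root-to-leaf path, since both $|v|$ and $\prefmaxgap(w)$ are (the latter by the monotonicity of prefix maxgaps down $\Tree(u)$); so, up to the border-seed test, shorter seeds sit higher in the tree. Moreover, a factor whose occurrences are confined to a pruned subtree has at most $m$ occurrences, and a seed of length $L$ with $k$ occurrences can cover at most $(k+2)L$ positions of the string it must cover, whence such a seed satisfies $L\ge n/(m+2)$. Consequently every seed of length below $n/(m+2)$ is witnessed by an edge of the expanded part, where $\prefmaxgap$ is known exactly and the capped values suffice by the observation above; the candidates living strictly inside pruned subtrees are all long and can be screened using only the retained root values together with the capped internal maxgaps, which is cheap because each such subtree carries at most $m$ occurrences. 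Establishing this residual bookkeeping---and checking that capping never lets a genuinely short seed escape---is the delicate point, and it is precisely the covering bound $L\ge n/(m+2)$ combined with the down-path monotonicity that makes it go through. Finally, if the resulting shortest seed has length at least $m$ we report it; otherwise we report that the shortest seed is shorter than $m$.
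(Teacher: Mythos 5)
You correctly isolate the Prefix Maxgap computation as the only super-linear part of \mbox{ComputeShortestSeed}, but the way you accelerate it has a genuine gap, and it is not the paper's route. The paper's key step is Observation~\ref{obs:properties}g: the shortest seed $s$ satisfies $\per(s)>|s|/2$, so consecutive occurrences of $s$ overlap by less than $|s|/2$ and $s$ occurs at most $2n/m$ times when $|s|\ge m$. Hence the node $w$ carrying $s$ satisfies $|\LL(w)|\le 2n/m$, and it suffices to solve the Prefix Maxgap Problem separately inside the maximal subtrees of $\Tree(u)$ having at most $2n/m$ leaves; these subtrees are disjoint, each has $O(n/m)$ nodes, so Theorems~\ref{thm:cpmp} and~\ref{thm:pmp} give $O((n/m)\log(n/m))$ time per subtree and $O(n\log(n/m))$ in total (with Lemma~\ref{lem:prefmaxgap-border-seed} adjusted to subtree-local prefix maxgaps). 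In other words, the correct move is to discard the \emph{top} of $\Tree(u)$ and process the small, deep subtrees in full.

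Your pruning does the opposite: you expand subtrees while they have more than $m$ leaves and stop inside subtrees with at most $m$ leaves, so the deep part of the tree --- exactly where a seed of length at least $m$ lives, since it has few occurrences --- is left unprocessed. The coverage bound $L\ge n/(k+2)$ points the wrong way: it lower-bounds the length of a seed with few occurrences, whereas what is needed is an upper bound on the number of occurrences of a long shortest seed, and it never produces the threshold $2n/m$. Concretely, for $m=n/2$ your recursion stops after one level, a single pruned subtree may contain $\Theta(n)$ nodes, and the shortest seed sits on an edge $(v,w)$ strictly inside it; to apply Lemma~\ref{lem:prefmaxgap-border-seed} there you need at least $\maxgap(w)$ (substituting anything smaller admits false positives by Observation~\ref{obs:maxgap-border-seed}, and when $\prefmaxgap(w)>|v|+1$ its exact value determines the shortest candidate on that edge, so capping at $m$ does not remove the need for it). Your proposed remedy --- screening pruned subtrees ``using the retained root values together with the capped internal maxgaps'' --- is precisely the unsolved subproblem, and the claim that it is cheap ``because each such subtree carries at most $m$ occurrences'' is backwards: $m$ leaves is large exactly when $m$ is large, which is when the target bound $O(n\log(n/m))$ is most demanding. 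The delicate point you flag is not residual bookkeeping; it is the theorem.
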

  \begin{proof}
    We show how to modify the ComputeShortestSeed algorithm.
    Denote by $s$ the shortest seed of $u$, $|s|=m$.

    By Observation~\ref{obs:properties}g, the longest overlap between consecutive
    occurrences of $s$ in $u$ is at most $\frac{m}2$, therefore the number of occurrences of
    $s$ in $u$ is at most $\frac{2n}{m}$.
    Hence, searching for the shortest seed of length at least $m$, it suffices to consider
    nodes $v$ of the suffix tree $\Tree(u)$ for which: $|v| \ge m$ and $|\LL(v)| \le \frac{2n}{m}$.
    
    Thus, we are only interested in prefix maxgaps for nodes in several subtrees of
    $\Tree(u)$, each of which contains $O(n/m)$ nodes.
    Thanks to the small size of each subtree, the algorithm ComputeShortestSeed
    finds all such prefix maxgaps in $O(n\log{(n/m)})$ time.
    Please note that using this algorithm for each node we obtain a prefix maxgap
    only in its subtree (not necessarily in the whole tree),
    however Lemma~\ref{lem:prefmaxgap-border-seed} can be simply adjusted to such
    a modified definition of prefix maxgaps.
    \qed
  \end{proof}

  \bibliographystyle{abbrv}
  \bibliography{seeds}

\end{document}